\documentclass[amsmath,amssymb,aps,pra,twocolumn]{revtex4}

\usepackage{amsthm}
\usepackage{url}

\newcommand{\comment}[1]{}
\newcommand{\ket}[1]{|#1\rangle}
\newcommand{\bra}[1]{\langle #1|}
\newcommand{\braket}[2]{\langle #1 | #2 \rangle}

\newcommand{\eps}{\epsilon}

\newcommand{\tr}{\operatorname{tr}}

\newcommand{\ot}{\otimes}
\newcommand{\cC}{\mathcal{C}}

\newcommand{\cP}{\mathcal{P}}
\newcommand{\cS}{\mathcal{S}}
\newcommand{\cU}{\mathcal{U}}

\renewcommand\Re{\operatorname{Re}}

\newcommand{\be}{\begin{equation}}
\newcommand{\ee}{\end{equation}}
\newcommand{\bes}{\begin{equation*}}
\newcommand{\ees}{\end{equation*}}
\def\ba#1\ea{\begin{align}#1\end{align}}
\def\bas#1\eas{\begin{align*}#1\end{align*}}
\def\bit{\begin{itemize}}
\def\eit{\end{itemize}}
\def\l{\left}
\def\r{\right}
\def\<{\langle}
\def\>{\rangle}

\newtheorem{theorem}{Theorem}
\newtheorem{lemma}[theorem]{Lemma}
\newtheorem{definition}[theorem]{Definition}

\newcommand{\thmref}[1]{Theorem \ref{thm:#1}}

\newcommand{\lemref}[1]{Lemma \ref{lem:#1}}
\newcommand{\secref}[1]{Section \ref{sec:#1}}
\newcommand{\defref}[1]{Definition \ref{def:#1}}

\begin{document}

\title{Learning and Testing Algorithms for the Clifford Group}

\author{Richard A.~Low}
 \email{low@cs.bris.ac.uk}
\affiliation{Department of Computer Science, University of Bristol, Bristol, U.K.}

\pacs{03.67.Ac, 03.65.Wj, 03.67.Lx}

\date{\today}

\begin{abstract}
Given oracle access to an unknown unitary $C$ from the Clifford group and its conjugate, we give an exact algorithm for identifying $C$ with $O(n)$ queries, which we prove is optimal.  We then extend this to all levels of the Gottesman-Chuang hierarchy (also known as the $\cC_k$ hierarchy).  Further, for unitaries not in the hierarchy itself but known to be close to an element of the hierarchy, we give a method of finding this close element.  We also present a Clifford testing algorithm that decides whether a given black-box unitary is close to a Clifford or far from every Clifford.
\end{abstract}

\maketitle

\section{Introduction}

A central problem in quantum computing is to determine an unknown quantum state from measurements of multiple copies of the state.  This process is known as quantum state tomography (see \cite{NielsenChuang} and references therein).  By making enough measurements, the probability distributions of the outcomes can be estimated from which the state can be inferred.  A related problem is that of quantum process tomography, where an unknown quantum evolution is determined by applying it to certain known input states.  There are several methods for doing this, including what are known as Standard Quantum Process Tomography \cite{ChuangNielsen97, PoyatosCiracZoller97} and Ancilla Assisted Process Tomography \cite{DArianoPresti01, Leung03}.  These methods work by using state tomography on the output states for certain input states.

However, all these procedures share one important downside: the number of measurements required increases exponentially with the number of qubits.  This already presents problems even with systems achievable with today's technology, for which complete tomographical measurements can take hours (e.g.~\cite{HaffnerTomography}) making tomography of larger systems unfeasible.  Unfortunately this exponential cost is necessary to determine a completely unknown state or process, since there are exponentially many parameters to measure.  To make tomography feasible for larger systems, we need to find a restriction that requires fewer measurements, ideally polynomially many.

One way to improve the measurement, or query, complexity is to assume some prior knowledge of the process.  For example, suppose the process was known to be one of a small number of unitaries, then the task is just to decide which.  This is the approach we take here.  As a simple example, consider being given a black box implementing an unknown Pauli matrix.  By applying this to half a maximally entangled state, the Pauli can be identified with one query.  This is essentially superdense coding \cite{SuperdenseCoding} and is explained in \secref{LearningPaulis}.  Indeed, if the black box performed a tensor product of arbitrary Paulis on $n$ qubits then it too can be identified with just one query.

We extend this to work for elements of the Clifford group (the normaliser of the Pauli group; see \defref{CliffordGroup}) and show that any member of the Clifford group can be learnt with $O(n)$ queries, which we show is optimal.  The Clifford group is an important subgroup of the unitary group that has found uses in quantum error correction and fault tolerance \cite{QECGeometry,ShorFaultTolerance,GottesmanFaultTolerance}.

Then generalising further, we show that elements of the Gottesman-Chuang hierarchy \cite{GottesmanChuang} (see \defref{GottesmanChuangHierarchy}), also known as the $\cC_k$ hierarchy, can also be learnt efficiently.  As the level $k$ increases, the set $\cC_k$ includes more and more unitaries so this implies ever larger sets can be learnt, although the number of queries scales exponentially with $k$.  Our methods also work if the unitary is known to be close to a Clifford (or any element of $\cC_k$ for some known $k$) rather than exactly a Clifford.

We also give a Clifford testing algorithm, which determines whether an unknown unitary is close to a Clifford or far from every Clifford.  This is an extension of the Pauli testing algorithm given in \cite{QBF}.  Indeed, our results are closely related to results in \cite{QBF} and we use some of the algorithms presented there as ingredients.  Our results can also be compared with \cite{AaronsonLearnability}, which contains methods to approximately learn quantum states.  Another related result is that of Aaronson and Gottesman \cite{AaronsonGottesmanStabilisers}, which provides a method of learning stabiliser states with linearly many copies.

We only consider query complexity although, at least for the Clifford group results, our methods are computationally efficient too.

The rest of the paper is organised as follows.  In \secref{definitions}, we define the Pauli and Clifford groups and the Gottesman-Chuang hierarchy.  In \secref{ExactLearning} we present our algorithm for exact learning of Clifford and $\cC_k$ elements.  In \secref{ApproxLearning} we show how to find the closest element of $\cC_k$ to an unknown unitary.  In \secref{CliffordTesting} we present our Clifford testing algorithm and then conclude in \secref{Conclusions}.

\section{The Pauli and Clifford Groups and the Gottesman-Chuang Hierarchy}
\label{sec:definitions}

Firstly, we define the Pauli group.  The Pauli operators are
\ba
\sigma_I &= \begin{pmatrix}1 & 0 \\ 0 & 1\end{pmatrix} 
\qquad
\sigma_x = \begin{pmatrix} 0& 1 \\ 1 & 0\end{pmatrix} \nonumber \\
\sigma_y &= \begin{pmatrix} 0& -i \\ i & 0\end{pmatrix} 
\qquad
\sigma_z = \begin{pmatrix} 1& 0 \\ 0 & -1\end{pmatrix} 
\ea
which we extend to $n$ qubits by taking all tensor products of these one qubit matrices.  Call this set $\hat{\cP}$ with elements $\sigma_p$ with $p \in \{I, x, y, z\}^n$.  We then have $|\hat{\cP}| = 4^n$ and $\tr \sigma_p \sigma_q = 2^n \delta_{pq}$.  Also, the Pauli matrices form an orthogonal basis for $2^n \times 2^n$ matrices.  Therefore we can write any such matrix in the Pauli basis in the form $\sum_p \gamma(p) \sigma_p$.  Then to make this into a group, the Pauli group $\cP$, we must include each matrix in $\hat{\cP}$ with phases $\{\pm 1, \pm i\}$.

We can now define the Clifford group:
\begin{definition}[The Clifford group]
\label{def:CliffordGroup}
The Clifford group is the normaliser of the Pauli group i.e.
\bes
\cC = \{ U \in \cU(2^n) : U \cP U^\dagger \subseteq \cP \}.
\ees
\end{definition}

Then the Gottesman-Chuang hierarchy is a generalisation:
\begin{definition}[The Gottesman-Chuang hierarchy \cite{GottesmanChuang}]
\label{def:GottesmanChuangHierarchy}
Let $C_1$ be the Pauli group $\cP$.  Then level $C_k$ of the hierarchy is defined recursively:
\bes
\cC_k = \{ U \in \cU(2^n) : U \cP U^\dagger \subseteq \cC_{k-1} \}.
\ees
\end{definition}
By definition, $\cC_2$ is the Clifford group $\cC$.  For $k>2$, $\cC_k$ is no longer a group but contains a universal gate set, whereas $\cC_1$ and $\cC_2$ are not universal.

\section{Learning Gottesman-Chuang Operations}
\label{sec:ExactLearning}

Before we give our algorithm for learning Gottesman-Chuang operations, we present a simple method for learning Pauli operations, which we use as the main ingredient.

\subsection{Learning Pauli Operations}
\label{sec:LearningPaulis}

This is due to \cite{QBF} and is in fact identical to the superdense-coding protocol \cite{SuperdenseCoding}.
\begin{theorem}[\cite{QBF}, Proposition 20]
\label{thm:PauliLearning}
Pauli operations can be identified with one query and in time $O(n)$.
\end{theorem}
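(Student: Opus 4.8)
The plan is to exploit the Choi--Jamiołkowski isomorphism and the fact that conjugation by a Pauli acts by a sign on the maximally entangled state. Concretely, prepare $n$ Bell pairs $\ket{\Phi} = 2^{-n/2}\sum_{x} \ket{x}\ket{x}$ across registers $A$ and $B$, apply the unknown Pauli $\sigma_p$ (with $p \in \{I,x,y,z\}^n$) to register $A$, and measure in the Bell basis. First I would record the elementary single-qubit identity: $(\sigma_q \otimes I)\ket{\Phi_1}$, as $q$ ranges over $\{I,x,y,z\}$, yields the four orthonormal Bell states. Tensoring over the $n$ qubits, $(\sigma_p \otimes I)\ket{\Phi}$ is one of the $4^n$ mutually orthogonal generalized Bell states, indexed precisely by $p$. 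Hence a single Bell-basis measurement on the $2n$ qubits returns $p$ deterministically, identifying $\sigma_p$ up to the global phase (which is unobservable and irrelevant for the group element in $\cP/\{\text{phases}\}$).

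The steps in order: (i) state the single-qubit Bell-state table and note orthogonality; (ii) take the $n$-fold tensor product to get that $\{(\sigma_p\otimes I)\ket{\Phi}\}_p$ is an orthonormal set of size $4^n = \dim(\cH_A\otimes\cH_B)$, hence a basis; (iii) conclude that the projective measurement in this basis, applied to the output of one query to $\sigma_p$ on half of $\ket{\Phi}$, outputs $p$ with probability $1$; (iv) address the time complexity — the Bell measurement is a transversal layer of CNOTs and Hadamards followed by a computational-basis measurement, so it is implementable in depth $O(1)$ and total gate count $O(n)$, and classically reading off $p$ from the $2n$ measured bits is $O(n)$.

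I do not anticipate a genuine obstacle here; the only points needing a little care are bookkeeping ones. One is the global-phase caveat: the protocol learns the Pauli as an operator on states (i.e. the coset in $\hat{\cP}$), not the phase in $\{\pm 1, \pm i\}$, which is fine since phases are not physically detectable by this or any experiment. The other is making precise that "identified with one query" means one use of the oracle for $\sigma_p$ acting on a register entangled with an ancilla of equal size — no conjugate query is needed at this level, unlike the Clifford case. Since this is quoted as Proposition~20 of \cite{QBF}, I would keep the write-up to the few lines above rather than reprove it in detail.
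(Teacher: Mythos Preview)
Your proposal is correct and matches the paper's proof: apply the unknown Pauli to half of the maximally entangled state and distinguish the resulting orthogonal states with a Bell-basis measurement, with the $O(n)$ time coming from preparation and measurement. The only cosmetic difference is that the paper establishes orthogonality via the one-line computation $\bra{\psi}(\sigma_p\otimes I)(\sigma_q\otimes I)\ket{\psi}=2^{-n}\tr\sigma_p\sigma_q=\delta_{pq}$ rather than by tensoring the single-qubit Bell table.
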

\begin{proof}
Apply the operator $\sigma_p$ to half of the maximally entangled state $\ket{\psi} = 2^{-n/2} \sum_i \ket{i i}$.  For different choices of $\sigma_p$, the resulting states are orthogonal so can be perfectly distinguished:
\bas
\bra{\psi} \l(\sigma_p \ot I\r) \l(\sigma_q \ot I\r) \ket{\psi} &= 2^{-n} \sum_{ij} \bra{ii} \sigma_p \sigma_q \ot I \ket{jj} \\
&= 2^{-n} \sum_{ij} \bra{i} \sigma_p \sigma_q \ket{j} \braket{i}{j} \\
&= 2^{-n} \sum_{i} \bra{i} \sigma_p \sigma_q \ket{i} \\
&= 2^{-n} \tr \sigma_p \sigma_q \\
&= \delta_{pq}.
\eas
The time complexity $O(n)$ comes from the preparation and measurement operations.
\end{proof}

\subsection{Learning Clifford Operations}

We can now present our algorithm for learning Clifford operations to illustrate our main idea for learning unitaries in the Gottesman-Chuang hierarchy.  We will use the fact that knowing how a unitary acts by conjugation on all elements of $\hat{\cP}$ identifies it uniquely (up to phase):
\begin{lemma}
\label{lem:UConjPauli}
Knowing $U \sigma_p U^\dagger$ for all $\sigma_p \in \hat{\cP}$ uniquely determines $U$, up to global phase.
\end{lemma}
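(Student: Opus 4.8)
The plan is to show that the Pauli conjugation data $\{U\sigma_p U^\dagger : \sigma_p \in \hat{\cP}\}$ pins down $U$ up to a phase. First I would observe that since $\hat{\cP}$ is a basis for the $2^n \times 2^n$ matrices, knowing the action of the linear map $X \mapsto U X U^\dagger$ on every $\sigma_p$ determines this superoperator completely; so it suffices to show that the map $U \mapsto (X \mapsto UXU^\dagger)$ is injective up to global phase. Suppose $U$ and $V$ both induce the same conjugation action, i.e.\ $U\sigma_p U^\dagger = V\sigma_p V^\dagger$ for all $p$. Then $W := V^\dagger U$ satisfies $W\sigma_p W^\dagger = \sigma_p$ for all $\sigma_p \in \hat{\cP}$, so $W$ commutes with every Pauli operator.

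Next I would invoke the fact that the Pauli operators span the full matrix algebra: anything commuting with all of $\hat{\cP}$ commutes with every $2^n \times 2^n$ matrix, hence lies in the centre of the matrix algebra, which by Schur's lemma (the algebra acts irreducibly on $(\bbC^2)^{\ot n}$) consists only of scalar multiples of the identity. Therefore $W = e^{i\theta} I$ for some phase $\theta$, i.e.\ $U = e^{i\theta} V$, which is exactly the claimed uniqueness up to global phase. One can make the Schur's lemma step self-contained by expanding $W = \sum_q \gamma(q)\sigma_q$ in the Pauli basis and noting that $W\sigma_p = \sigma_p W$ for all $p$ forces $\gamma(q) = 0$ whenever $\sigma_q$ anticommutes with some $\sigma_p$; since every non-identity Pauli anticommutes with at least one Pauli, only the $\sigma_I$ term survives, giving $W \propto I$.

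I do not anticipate a serious obstacle here: the statement is essentially the well-known fact that $PU(d)$ embeds in the automorphism group of $M_d(\bbC)$, and the only thing to be careful about is phrasing the irreducibility/centraliser argument cleanly. The mild subtlety worth a sentence is that we only assume $U\sigma_p U^\dagger = V\sigma_p V^\dagger$ as \emph{operators} (with whatever phase $U\sigma_p U^\dagger$ happens to carry), not merely up to sign — but conjugation automatically preserves the operator exactly, so this is not actually a gap. I would therefore present the Pauli-basis computation above as the short, elementary proof.
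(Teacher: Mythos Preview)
Your proof is correct and follows the same line as the paper's: both start from the observation that the Paulis span the full matrix algebra, so the conjugation superoperator $X \mapsto UXU^\dagger$ is determined, and hence $U$ is fixed up to a global phase. The paper's proof is a two-sentence sketch that simply asserts this last implication, while you spell out the centraliser/Schur argument explicitly; that is added rigour, not a different approach.
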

\begin{proof}
The Pauli matrices form a basis for all $2^n \times 2^n$ matrices so knowing the action of $U$ on the Paulis is enough to determine the action of $U$ on any matrix up to phase.  The phase cannot be determined because action by conjugation does not reveal the phase.
\end{proof}
Now let $G = \{ \sigma_{x_i}, \sigma_{z_i} \}_{i=1}^n$ where $\sigma_{x_i}$ ($\sigma_{z_i}$) is the matrix with $\sigma_x$ ($\sigma_z$) acting on qubit $i$ and trivially elsewhere.  We think of this as a set of generators for $\hat{\cP}$ since each element of $\hat{\cP}$ can be written as a product of elements of $G$, up to phase.  Using this, knowledge of how $U$ acts on elements of $G$ is sufficient to determine the action on all of $\hat{\cP}$:
\begin{lemma}
\label{lem:UConjGenerator}
$U \sigma_p U^\dagger$ for any $\sigma_p \in \hat{\cP}$ can be calculated from knowledge of $U \sigma_g U^\dagger$ for each $\sigma_g \in G$.
\end{lemma}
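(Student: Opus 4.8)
The plan is to express an arbitrary $\sigma_p$ as a product of generators from $G$ and then use multiplicativity of conjugation. First I would write $\sigma_p = \sigma_x^{a_1} \sigma_z^{b_1} \ot \cdots \ot \sigma_x^{a_n} \sigma_z^{b_n}$ up to phase, where $a_i, b_i \in \{0,1\}$ are determined by $p$ (using $\sigma_y \propto \sigma_x \sigma_z$ on each qubit). Equivalently, $\sigma_p$ equals, up to a phase in $\{\pm 1, \pm i\}$, the product $\prod_{i} \sigma_{x_i}^{a_i} \prod_i \sigma_{z_i}^{b_i}$ of elements of $G$. Since the phase is irrelevant under conjugation (it cancels: $U (c \sigma) U^\dagger = c\, U \sigma U^\dagger$ and we only care about the operator up to phase, or rather the phase is simply carried along), this reduces the problem to conjugating a product.

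Next I would invoke the homomorphism property of conjugation: for any operators $A, B$, we have $U(AB)U^\dagger = (UAU^\dagger)(UBU^\dagger)$. Applying this repeatedly to the decomposition of $\sigma_p$, we get
\bes
U \sigma_p U^\dagger = c \prod_i \l(U \sigma_{x_i} U^\dagger\r)^{a_i} \prod_i \l(U \sigma_{z_i} U^\dagger\r)^{b_i},
\ees
where $c$ is the same phase as before. Every factor on the right-hand side is one of the known quantities $U \sigma_g U^\dagger$ for $\sigma_g \in G$ (or the identity, when the exponent is $0$), so the right-hand side is fully determined by the given data. This establishes the claim.

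There is no real obstacle here — the statement is essentially the observation that conjugation is a group homomorphism composed with the fact that $G$ generates $\hat{\cP}$ up to phase — but the one point to handle carefully is the bookkeeping of phases: $G$ generates $\hat{\cP}$ only up to phases in $\{\pm 1, \pm i\}$, so strictly speaking $\sigma_p$ is recovered as a product of generators \emph{times a known phase}. I would note that this phase is harmless: it multiplies through the conjugation unchanged, and in any case the surrounding application (via \lemref{UConjPauli}) only needs $U$ up to global phase. I would also remark, since it will be used in the sequel, that the decomposition $p \mapsto (a_i, b_i)$ and the accompanying phase are computable in time $O(n)$, which is why the resulting learning procedure is efficient.
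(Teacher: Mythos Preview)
Your proposal is correct and follows essentially the same approach as the paper: write $\sigma_p$ as a phase times a product of generators in $G$, then use $U(AB)U^\dagger = (UAU^\dagger)(UBU^\dagger)$ to express $U\sigma_p U^\dagger$ in terms of the known $U\sigma_g U^\dagger$. The paper's proof is just the one-line version of exactly this argument, without your explicit $(a_i,b_i)$ bookkeeping or the efficiency remark.
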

\begin{proof}
Let $\sigma_p = \alpha \sigma_{g_1} \ldots \sigma_{g_m}$ for $\sigma_{g_i} \in G$ where $\alpha$ is a phase.  Then
\bes
U \sigma_p U^\dagger = \alpha U \sigma_{g_1} \ldots \sigma_{g_m} U^\dagger = \alpha U \sigma_{g_1} U^\dagger \ldots U \sigma_{g_m} U^\dagger.\qedhere
\ees
\end{proof}
With these definitions and observations, we can now present the Clifford learning algorithm.
\begin{theorem}
\label{thm:CliffordLearning}
Given oracle access to an unknown Clifford operation $C$ and its conjugate $C^\dagger$, $C$ can be determined exactly (up to global phase) with $2n+1$ queries to $C$ and $2n$ to $C^\dagger$.  The algorithm runs in time $O(n^2)$.
\end{theorem}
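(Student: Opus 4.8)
The plan is to reduce learning $C$ to learning, for each generator $\sigma_g \in G$, the Pauli $C \sigma_g C^\dagger$, and then invoke \lemref{UConjGenerator} and \lemref{UConjPauli} to reconstruct $C$ up to global phase. Since $C$ is Clifford, each $C \sigma_g C^\dagger$ is (up to sign) an element of $\hat\cP$, so it is specified by $2n$ bits plus a sign; the task per generator is to extract these. There are $2n$ generators, which is where the $O(n)$ query count comes from.

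The main step is: given oracle access to $C$ and $C^\dagger$, determine $C \sigma_g C^\dagger$ for a fixed $g$. I would do this by preparing the maximally entangled state $\ket\psi = 2^{-n/2}\sum_i \ket{ii}$, applying $C \otimes I$ to produce $(C\otimes I)\ket\psi$, then applying $\sigma_g \otimes I$, then applying $C^\dagger \otimes I$. The resulting state is $(C^\dagger \sigma_g C \otimes I)\ket\psi$; wait — I want $C\sigma_g C^\dagger$, so instead I apply $C^\dagger$, then $\sigma_g$, then $C$, giving $(C \sigma_g C^\dagger \otimes I)\ket\psi$. By the calculation in the proof of \thmref{PauliLearning}, distinct Paulis give orthogonal states, so a Bell-basis measurement identifies the Pauli $C \sigma_g C^\dagger$ exactly with a single use of $C^\dagger$ and a single use of $C$. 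Repeating over the $2n$ generators uses $2n$ queries to $C$ and $2n$ to $C^\dagger$; the extra single query to $C$ (for a total of $2n+1$) I expect is spent fixing the relative sign/phase conventions — e.g. one more round, or applying $C$ to a known stabiliser state — to pin down the signs of the $C\sigma_g C^\dagger$ consistently, since conjugation alone leaves a sign ambiguity that must be resolved coherently across generators. (Alternatively the ``$+1$'' may come from an initial query used to calibrate, and the asymmetry between $2n+1$ and $2n$ reflects exactly this.)

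Having learned $C\sigma_g C^\dagger$ for all $\sigma_g \in G$, \lemref{UConjGenerator} lets me compute $C\sigma_p C^\dagger$ for every $\sigma_p \in \hat\cP$ by multiplying the generator images (tracking phases), and \lemref{UConjPauli} then says this data determines $C$ up to global phase, which is the best possible since oracle access to $C$ and $C^\dagger$ never reveals the global phase. For the time complexity: each of the $O(n)$ rounds involves preparing and measuring an $n$-qubit Bell state, costing $O(n)$ elementary operations, for $O(n^2)$ total; the classical post-processing (composing $O(n)$ Pauli images, i.e.\ $O(n)$ symplectic/phase updates each touching $O(n)$ coordinates) is also $O(n^2)$, so the overall runtime is $O(n^2)$.

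The step I expect to be the real obstacle is the phase/sign bookkeeping rather than the information-theoretic content: ensuring that the signs attached to the $2n$ measured Paulis $C\sigma_g C^\dagger$ are mutually consistent (they must jointly arise from a genuine unitary $C$, not just individually be valid Paulis), and verifying that the Bell-measurement outcomes can be read off in $O(n)$ time. Everything else — orthogonality of the encoded states, closure of the Clifford group under conjugation, and the reconstruction lemmas — is already in hand from the earlier parts of the excerpt.
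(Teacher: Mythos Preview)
Your overall strategy matches the paper's: learn $C\sigma_g C^\dagger$ for each of the $2n$ generators via the Bell-state trick of \thmref{PauliLearning}, then invoke \lemref{UConjGenerator} and \lemref{UConjPauli}. One small imprecision: the Bell measurement identifies the element of $\hat\cP$ but \emph{not} its sign, since $(\pm\sigma_p\otimes I)\ket\psi$ differ only by a global phase; you acknowledge this later, but the sentence ``identifies the Pauli $C\sigma_g C^\dagger$ exactly'' overstates what that step yields.

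The genuine gap is exactly the one you flag: you do not supply a mechanism that fixes all $2n$ signs with a \emph{single} extra query to $C$ and \emph{zero} extra queries to $C^\dagger$. The paper's trick is this: once you know the sign-free images $\sigma_{a_i},\sigma_{b_i}$, you can yourself synthesise (no oracle needed) a Clifford $C'$ satisfying $C'\sigma_{x_i}C'^\dagger=\sigma_{a_i}$ and $C'\sigma_{z_i}C'^\dagger=\sigma_{b_i}$ with all signs $+1$. Then $C$ and $C'$ agree on every generator up to sign, which forces $C=C'\sigma$ for some Pauli $\sigma$ (the commutation/anticommutation of $\sigma$ with each generator supplies the missing signs). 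Now apply $C'^\dagger C$---one call to the $C$ oracle, $C'^\dagger$ implemented by you---to half of $\ket\psi$ and Bell-measure to learn $\sigma$ in one shot via \thmref{PauliLearning}. That is the ``$+1$'' and explains the asymmetry $2n+1$ versus $2n$. Your stabiliser-state suggestion is in the right spirit but would need further argument to get all $2n$ bits from one query; the $C'^\dagger C$ reduction makes it immediate. The $O(n^2)$ runtime in the paper is dominated by synthesising and applying $C'$ (any Clifford on $n$ qubits has an $O(n^2)$ circuit), which is consistent with your accounting.
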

\begin{proof}
From the definition of the Clifford group, $C \sigma_p C^\dagger \in \cP$ for all $\sigma_p \in \hat{\cP}$.  Note that $C \sigma_p C^\dagger$ is not necessarily a Pauli operator in $\hat{\cP}$ because there is a phase of $\pm 1$ (complex phases are not allowed because $C \sigma_p C^\dagger$ is Hermitian).  Determining which Pauli operator and phase for every $\sigma_p$ would be sufficient to learn $C$ using \lemref{UConjPauli}.  But from \lemref{UConjGenerator}, we only need to know $C \sigma_g C^\dagger$ for each $\sigma_g \in G$.

Let $C \sigma_{x_i} C^\dagger = \alpha_i \sigma_{a_i}$ and $C \sigma_{z_i} C^\dagger = \beta_i \sigma_{b_i}$, where $\alpha_i, \beta_i = \pm 1$.  Knowing just $\sigma_{a_i}$ and $\sigma_{b_i}$ is enough to specify $C$ up to a Pauli correction factor $\sigma_q$ which gives the phases $\alpha_i$ and $\beta_i$.  Choosing $\sigma_q$ that anticommutes with $\sigma_{x_i}$ flips the sign of $\alpha_i$ and similarly for $\sigma_{z_i}$.  We now present the algorithm:
\begin{enumerate}
\item{Apply $C \sigma_{x_i} C^\dagger$ and $C \sigma_{z_i} C^\dagger$ for each $i$ and use \thmref{PauliLearning} to determine $\sigma_{a_i}$ and $\sigma_{b_i}$.  This uses $2n$ queries to both $C$ and $C^\dagger$.}
\item{Let $C'$ be such that $C' \sigma_{x_i} C'^\dagger = \sigma_{a_i}$ and $C' \sigma_{z_i} C'^\dagger = \sigma_{b_i}$ i.e.~the phases are all $+1$.  Then, choosing a phase for $C'$, we can write $C = C' \sigma$ where
\be
\sigma = \prod_{i : \alpha_i = -1} \sigma_{z_i} \prod_{i : \beta_i = -1} \sigma_{x_i}.
\ee
Then implement $C'^\dagger C$ to determine $\sigma$ using \thmref{PauliLearning}.  This uses one query to $C$.  We can now calculate the phases $\alpha_i$ and $\beta_i$.}
\end{enumerate}
To work out the time complexity, note that in step 1 the $O(n)$ time Pauli learning algorithm is called $2n$ times.  Then for step 2, the Clifford $C'$ can be implemented in $O(n^2)$ time using for example Theorem 10.6 of \cite{NielsenChuang}.
\end{proof}
We now show that this algorithm is optimal, in terms of number of queries, up to constant factors:
\begin{lemma}
\label{lem:CliffordLearningOptimal}
Any method of learning a Clifford gate requires at least $n$ queries.
\end{lemma}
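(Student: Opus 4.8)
The plan is an information-theoretic counting argument, carried out on a conveniently structured sub-family of Cliffords. First I would restrict attention to the Cliffords built only from controlled-$Z$ gates, $C_B = \prod_{i<j} CZ_{ij}^{B_{ij}}$ where $B$ ranges over symmetric zero-diagonal matrices over $\mathbb{F}_2$; these act diagonally in the computational basis by $C_B \ket{x} = (-1)^{\sum_{i<j} B_{ij} x_i x_j}\ket{x}$, there are $2^{\binom{n}{2}} = 2^{\Theta(n^2)}$ of them, and no two agree even up to a global phase. So exactly identifying an unknown such $C$ already forces the algorithm to extract $\Theta(n^2)$ bits. Since the algorithm is exact, the state $\ket{\Psi(C)}$ it holds after all its queries must be perfectly distinguishable across these $C$, i.e.\ the $\ket{\Psi(C)}$ form an orthonormal set, so the final measurement must have $2^{\Theta(n^2)}$ outcomes.

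The core of the argument is then to bound the information a single query can supply. A query to $C$ (or to $C^\dagger$, which for this family equals $C$) applies $C \otimes I$ to the current state, with an arbitrary ancilla and arbitrary unitaries before and after; I would argue that one such query can multiply the algorithm's distinguishing power by at most $2^{O(n)}$, since $C$ touches only the $n$ system qubits and the ancilla is inert during the query, so that after $t$ queries $2^{O(tn)} \ge 2^{\Theta(n^2)}$ and hence $t = \Omega(n)$. For this family the mechanism is transparent: learning $C_B$ is learning the quadratic phase $q_B(x) = \sum_{i<j} B_{ij} x_i x_j$ from phase-oracle queries; its directional derivative $q_B(x) \oplus q_B(x \oplus e_i)$ is the linear form $\sum_j B_{ij} x_j$ that one reads off with a single Bernstein--Vazirani-type query, but the derivatives along different directions $e_i$ call for incompatible measurements, so only $O(n)$ of the $\binom{n}{2}$ entries of $B$ can be pinned down per query.

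The step I expect to be the real obstacle is precisely this per-query bound, because with an unbounded ancilla and adaptive queries it is not obvious that a query is worth only ``$n$ qubits'': the ancilla may already be carrying a large amount of correlation accumulated from earlier queries. The cleanest way I see to make it rigorous is the polynomial method on the chosen family: the probability the algorithm outputs any fixed answer is a polynomial of degree at most $2t$ in the $2^n$ sign variables $(-1)^{q_B(x)}$ (degree $t$ in the amplitudes, squared to form the probability), equivalently a sum of characters $(-1)^{\langle B, M\rangle}$ indexed by matrices $M$ of a restricted ``weight''; one then must show that no such low-degree object can isolate a single quadratic form from all the others unless its degree is $\Omega(n)$. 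A direct adversary argument formalising the ``one direction per query'' intuition is a plausible alternative.
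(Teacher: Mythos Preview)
Your high-level strategy --- information-theoretic counting, comparing the number of distinguishable Cliffords against the information extractable per query --- is exactly the paper's. But the step you flag as ``the real obstacle'' is in fact the easy part, and the paper dispatches it in one line.

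The observation you are missing is that a single oracle call to $C$ on $n$ qubits is precisely one use of an $n$-qubit quantum channel from a party who knows $C$ to the learner. By the optimality of superdense coding (equivalently, the entanglement-assisted Holevo bound), such a use conveys at most $2n$ classical bits of mutual information, irrespective of ancilla size or adaptivity. Hence after $m$ queries the learner has at most $2nm$ bits about $C$, and exact identification forces $2^{2nm} \ge |\cC/\text{phase}|$. The paper then just plugs in the known order of the Clifford group, $\ge 2^{2n^2+n+3}$, and reads off $m \ge n$.

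With that bound in hand, your restriction to the $CZ$ subfamily, the orthogonality-of-final-states argument, the polynomial method, and the adversary alternative are all superfluous. The $CZ$ family is a perfectly good witness --- $\binom{n}{2}$ bits to learn, so $2nm \ge \binom{n}{2}$ still gives $m = \Omega(n)$ --- but it buys you nothing over the full group, and the polynomial-method route you sketch has its own wrinkle (the $2^n$ sign variables $(-1)^{q_B(x)}$ are not independent, being determined by only $\binom{n}{2}$ bits of $B$, so the standard degree argument does not apply cleanly). You have correctly located where the content of the proof lies; you just reached for heavy machinery where a citation to Holevo suffices.
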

\begin{proof}
Each application of the gate $C$ can give at most $2n$ bits of mutual information about $C$.  This follows from the optimality of superdense coding \cite{SuperdenseCoding}.  The Clifford group (modulo global phase) is of size \cite{Calderbank98quantumerror} $2^{n^2+2n+3} \prod_{j=1}^n (4^j-1) \ge 2^{2n^2+n+3}$.  To identify an element with $m$ queries, we therefore need
\be
2^{2nm} \ge 2^{2n^2+n+3}
\ee
which implies $m \ge n$.
\end{proof}
It is unfortunate that access to $C^\dagger$ is also required, but we do not know a method with optimal query complexity that works without $C^\dagger$.  There are however methods that use $O(n^2)$ queries that do not use $C^\dagger$.  The result of \cite{HowManyCopiesStateDesc} can be used to show that $O(n^2)$ queries to $C$ are sufficient, by distinguishing the states $C \ot I \ket{\psi}$ for different Cliffords $C$ and where $\ket{\psi}$ is the maximally entangled state.  We can use \lemref{UniqueCloseC} to show that these states are far apart in the distance measure used in \cite{HowManyCopiesStateDesc}, allowing us to apply their result.

\subsection{Learning Gottesman-Chuang Operations}

\thmref{CliffordLearning} can easily be generalised to learning any operation from the $\cC_k$ hierarchy:
\begin{theorem}
\label{thm:CkLearning}
Given oracle access to an unknown operation $C \in \cC_k$ and its conjugate $C^\dagger$, $C$ can be determined exactly (up to phase) with $\frac{(2n)^k - 1}{2n-1}$ queries to $C$ and $(2n)^{k-1}$ to $C^\dagger$.
\end{theorem}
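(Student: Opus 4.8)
The plan is to prove \thmref{CkLearning} by induction on $k$, using \thmref{PauliLearning} for the base case $k=1$ and taking \thmref{CliffordLearning} ($k=2$) as the template: each extra level of the hierarchy is peeled off by one more layer of the same construction. First note that \lemref{UConjPauli} and \lemref{UConjGenerator} were stated for arbitrary unitaries, so it suffices to determine $C\sigma_g C^\dagger$ for each of the $2n$ generators $\sigma_g\in G$. The structural fact is \defref{GottesmanChuangHierarchy} itself: if $C\in\cC_k$ then $D_g:=C\sigma_g C^\dagger\in\cC_{k-1}$ for every generator, and, exactly as noted in the proof of \thmref{CliffordLearning}, $D_g$ is Hermitian because $\sigma_g$ is, so $D_g^\dagger=D_g$ and oracle access to the conjugate of $D_g$ is free.

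For the inductive step, for each of the $2n$ generators I would simulate an oracle for the $\cC_{k-1}$ element $D_g$ — one use of $D_g$ is realized by one query to $C^\dagger$, a free Pauli $\sigma_g$, and one query to $C$, and the same sandwich serves for $D_g^\dagger=D_g$ — and run the $\cC_{k-1}$ learning algorithm on each $D_g$, obtaining it up to a global phase, which for a Hermitian operator is just a sign. As in step 2 of \thmref{CliffordLearning}, the $2n$ sign-ambiguous Hermitian operators $\{D_g\}$ pin down $C$ up to right multiplication by a single Pauli $\sigma$ (the unknown signs encode exactly the commutation pattern of $\sigma$ with the generators, and every pattern is realized by some Pauli), and one further use of $C'^\dagger C$ together with \thmref{PauliLearning} recovers $\sigma$, where $C'$ is the explicit $\cC_k$ element consistent with the data learned so far. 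The identical phase-cleanup has to be carried out at every level of the recursion, since each recursive call returns its target only up to phase; threading these corrections through correctly is one thing that needs care.

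To get the claimed counts I would let $F(k)$ and $G(k)$ be the numbers of queries to $C$ and $C^\dagger$ and read off the recursion from the step above: $2n$ recursive calls followed by one phase-fixing query to $C$, giving $F(k)=2n\,F(k-1)+1$ and $G(k)=2n\,G(k-1)$, which with $F(1)=1$, $G(1)=0$ and $F(2)=2n+1$, $G(2)=2n$ (the $G$-recursion being anchored at $k=2$, since the Pauli base case needs no conjugate query) solves to $F(k)=\frac{(2n)^k-1}{2n-1}$ and $G(k)=(2n)^{k-1}$.

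The main obstacle is obtaining exactly this recursion rather than a wasteful one. A single query to $D_g$ costs one query to $C$ plus one to $C^\dagger$, so the most naive bookkeeping — simulate every oracle call of the $\cC_{k-1}$ algorithm by a fresh sandwich around $C$ and $C^\dagger$ — makes the total query count satisfy $T(k)\approx 4n\,T(k-1)$ instead of the required $T(k)=2n\,T(k-1)+1$, overshooting by about a factor of two per level. Closing this gap is the crux: it should follow from using the Hermiticity of the intermediate operators $D_g$ (so their conjugates are never charged separately) together with a careful organisation of the nested sandwiches so that they are reused across levels rather than stacked, so that a deeply nested conjugated generator is realized with only one outer use of $C$ and one of $C^\dagger$. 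A final, routine check is that every operator produced stays inside the hierarchy — conjugating a generator by $C\in\cC_k$ lands in $\cC_{k-1}$ directly from \defref{GottesmanChuangHierarchy} — and that the explicit elements $C'$ used for phase-fixing at each level, which for $k>2$ need not be efficiently constructible, are applied as known operations and hence cost no further queries, which is why \thmref{CkLearning}, unlike \thmref{CliffordLearning}, makes no claim about running time.
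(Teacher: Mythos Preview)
Your algorithm and inductive structure are exactly the paper's: base case \thmref{PauliLearning}, recursively learn each $D_g:=C\sigma_g C^\dagger\in\cC_{k-1}$ for the $2n$ generators, then fix the residual Pauli with one further call to $C$ as in step~2 of \thmref{CliffordLearning}. The paper then simply writes down the recurrences $T(k+1)=2n\,T(k)+1$ and $T'(k+1)=2n\,T'(k)$ and solves them, with no further comment.

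The ``main obstacle'' you raise is therefore not something you are missing from the paper's argument; the paper does not address it. And your concern is legitimate: a single simulated call to $D_g=C\sigma_g C^\dagger$ (or to $D_g^\dagger$, which by Hermiticity is the same oracle) costs one use of $C$ \emph{and} one of $C^\dagger$, so the honest bookkeeping for this algorithm gives
\[
T(k+1)=2n\bigl(T(k)+T'(k)\bigr)+1,\qquad T'(k+1)=2n\bigl(T(k)+T'(k)\bigr),
\]
which already at $k=3$ yields $8n^2+2n+1$ calls to $C$ rather than the claimed $4n^2+2n+1$. Your suggested fix does not close the gap: Hermiticity only says $D_g$ and $D_g^\dagger$ are the same oracle, not that either is cheaper, and the nested sandwich $D_g\sigma_h D_g^\dagger=C\sigma_g C^\dagger\sigma_h C\sigma_g C^\dagger$ genuinely contains two $C$'s and two $C^\dagger$'s separated by a varying $\sigma_h$, so there is no adjacent $C^\dagger C$ to cancel. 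In short, your plan matches the paper's proof; the recurrence you distrust is simply asserted there, and as written the algorithm appears to need $\Theta\bigl((4n)^{k-1}\bigr)$ total queries rather than the stated $\Theta\bigl((2n)^{k-1}\bigr)$.
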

\begin{proof}
The proof is by induction.  The base case is for the Paulis and is proven in \thmref{PauliLearning}.  Then, to learn $C \in \cC_{k+1}$, we assume we have a learning algorithm for members of $\cC_{k}$.  Apply $C \sigma_g C^\dagger$ for each $\sigma_g \in G$.  These operations are elements of $\cC_{k}$ so use the learning algorithm for $\cC_{k}$ to determine these up to phase.  Then use the last step of \thmref{CliffordLearning} to determine the phases.

We now determine the number of queries to $C$ and $C^\dagger$.  Let $T(k)$ be the number of queries to $C$ and $T'(k)$ the number of queries to $C^\dagger$.  We have the recurrences
\ba
T(k+1) &= 2n T(k) + 1 \nonumber \\
T(1) &= 1
\ea
and
\ba
T'(k+1) &= 2n T'(k) \nonumber \\
T'(2) &= 2n
\ea
which have solutions $T(k) = \frac{(2n)^k - 1}{2n-1}$ and $T'(k) = (2n)^{k-1}$ (with $T'(1) = 0$).
\end{proof}

\section{Learning Unitaries Close to $\cC_k$ Elements}
\label{sec:ApproxLearning}

Here we suppose that we are given a unitary that is known to be close to an element of $\cC_k$ for some given $k$.  We present a method for finding this element.  But first we must define our distance measure.

We would like our distance measure to not distinguish between unitaries that differ by just an unobservable global phase.  We define a `distance' $D$ below with this property.  However, firstly define the distance $D^+$ to be a normalised 2-norm distance:
\begin{definition}
For $U_1$ and $U_2$ $d \times d$ matrices,
\bes
D^+(U_1, U_2) := \frac{1}{\sqrt{2d}} || U_1 - U_2 ||_2.
\ees
where $||A||_2 = \sqrt{\tr A^\dagger A}$.
\end{definition}
We have chosen the normalisation so that $0 \le D^+(U_1, U_2) \le 1$.  We now define our phase invariant `distance':
\begin{definition}
\label{def:DistanceMeasure}
For $U_1$ and $U_2$ $d \times d$ matrices,
\bes
D(U_1, U_2) := \frac{1}{\sqrt{2d^2}} || U_1 \ot U_1^* - U_2 \ot U_2^* ||_2
\ees
\end{definition}
This is not a true distance since $D(U_1, U_2) = 0$ does not imply $U_1 = U_2$, but that $U_1$ and $U_2$ are the same up to a phase so the difference is unobservable.  From the 2-norm definition, we can show:
\begin{lemma}
\label{lem:DistFromInnerProd}
\be
D^+(U_1, U_2) = \sqrt{1 - \Re \frac{\tr U_1 U_2^\dagger}{d}}
\ee
and
\be
D(U_1, U_2) = \sqrt{1 - \l| \frac{\tr U_1 U_2^\dagger}{d} \r|^2}.
\ee
\end{lemma}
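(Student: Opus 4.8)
The plan is to reduce both identities to one computation, namely expanding $\norm{A-B}_2^2 = \tr[(A-B)^\dagger(A-B)]$, and then to specialise. First I would write
\bes
\norm{U_1 - U_2}_2^2 = \tr U_1^\dagger U_1 - \tr U_1^\dagger U_2 - \tr U_2^\dagger U_1 + \tr U_2^\dagger U_2,
\ees
use unitarity to replace $\tr U_i^\dagger U_i = \tr I = d$, and use cyclicity together with $\tr X^\dagger = \overline{\tr X}$ to get $\tr U_1^\dagger U_2 = \overline{\tr U_2^\dagger U_1} = \overline{\tr U_1 U_2^\dagger}$, so that the cross terms combine to $-2\Re \tr U_1 U_2^\dagger$. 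Hence $\norm{U_1-U_2}_2^2 = 2d - 2\Re \tr U_1 U_2^\dagger$, and dividing by $2d$ and taking the square root gives the stated formula for $D^+$.

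For $D$, I would observe that $D(U_1,U_2)$ is literally $D^+$ applied to the pair $U_1 \ot U_1^*$, $U_2 \ot U_2^*$, which are unitaries of dimension $d^2$ (the normalisation $1/\sqrt{2d^2}$ in \defref{DistanceMeasure} is exactly the $D^+$ normalisation for dimension $d^2$). So the first part already gives $D(U_1,U_2)^2 = 1 - \Re\big(\tr[(U_1 \ot U_1^*)(U_2 \ot U_2^*)^\dagger]/d^2\big)$. The key algebraic step is to simplify that trace: since $(U_2 \ot U_2^*)^\dagger = U_2^\dagger \ot U_2^T$, the product factorises across the tensor slots as $U_1 U_2^\dagger \ot (U_1^* U_2^T)$, and because $(AB)^* = A^* B^*$ and $(U_2^\dagger)^* = U_2^T$ the second slot is $(U_1 U_2^\dagger)^*$. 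Therefore the trace equals $\tr[U_1 U_2^\dagger]\,\overline{\tr[U_1 U_2^\dagger]} = |\tr U_1 U_2^\dagger|^2$, which is real and nonnegative, so the $\Re$ is vacuous. Dividing by $d^2$ and taking the square root yields the formula for $D$.

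The whole argument is routine; the only place requiring a little care is the conjugate/transpose bookkeeping in the tensor factor, i.e.\ checking that the second slot really collapses to the complex conjugate of the first slot so that the product of traces becomes $|\tr U_1 U_2^\dagger|^2$ rather than, say, $(\tr U_1 U_2^\dagger)^2$. Beyond that there is no real obstacle.
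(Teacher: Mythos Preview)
Your proposal is correct and is exactly the direct computation the paper has in mind; the paper in fact omits the proof entirely, simply remarking that the identities follow from the $2$-norm definition, and your expansion of $\norm{U_1-U_2}_2^2$ together with the tensor/conjugate bookkeeping is the natural way to fill that in.
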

From this we can easily see that $0 \le D(U_1, U_2) \le 1$ with equality if and only if $U_1$ and $U_2$ are orthogonal.  Further note that by the unitary invariance of the 2-norm, both $D$ and $D^+$ are unitarily invariant and from the triangle inequality for the 2-norm they both obey the triangle inequality.

Our approximate learning method will find the unique closest element of $\cC_k$ to $U$.  In order to guarantee uniqueness, the distance must be upper bounded:
\begin{lemma}
\label{lem:UniqueCloseC}
If $D(U, C) < \frac{1}{2^{k-1/2}}$ for some $C \in \cC_k$ then $C$ is unique up to phase.
\end{lemma}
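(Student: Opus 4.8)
The plan is to argue by induction on $k$. The engine of the induction will be an estimate showing that conjugating two unitaries by a fixed Pauli at most \emph{doubles} their $D$-distance; since the radius $2^{-(k-1/2)}$ itself shrinks by exactly a factor of two from level $k-1$ to level $k$, the exponents then close up. The only other point needing care is the possibility that two candidate elements differ by a Pauli factor, which is cheap to rule out.

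For the base case $k=1$ we have $\cC_1 = \cP$, and since $D$ is phase-invariant an element of $\cC_1$ modulo phase is simply a Pauli $\sigma_p \in \hat{\cP}$. Writing $U = \sum_p \gamma(p)\sigma_p$, \lemref{DistFromInnerProd} gives $D(U,\sigma_p)^2 = 1 - |\gamma(p)|^2$, while unitarity of $U$ gives $\sum_p |\gamma(p)|^2 = 1$. Hence $D(U,\sigma_p) < 2^{-1/2}$ forces $|\gamma(p)|^2 > 1/2$, which can hold for at most one $p$, so the closest element of $\cC_1$ is unique up to phase.

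The key estimate I would establish is: for any Hermitian Pauli $\sigma \in \hat{\cP}$ and unitaries $W_1,W_2$ on $d = 2^n$ dimensions, $D(W_1\sigma W_1^\dagger, W_2\sigma W_2^\dagger) \le 2\, D(W_1,W_2)$. Setting $V = W_2^\dagger W_1$ and $t = |\tr V|/d$, cyclicity of the trace and $\sigma^2 = I$ turn the relevant inner products into $\tr(W_1 W_2^\dagger) = \tr V$ and $\tr(W_1\sigma W_1^\dagger (W_2\sigma W_2^\dagger)^\dagger) = \tr(\sigma V^\dagger \sigma V)$, so by \lemref{DistFromInnerProd} the estimate reduces to $|\tr(\sigma V^\dagger \sigma V)/d|^2 \ge 4t^2 - 3$. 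When $t^2 \le 3/4$ this is immediate; otherwise write $V = e^{i\theta}(I+A)$ with the phase chosen so that $\tr(I+A) = |\tr V|$, which forces $\|A\|_2^2 = 2d(1-t)$. Expanding $\sigma V^\dagger \sigma V = \sigma(I+A^\dagger)\sigma(I+A)$ and using $\tr(\sigma A^\dagger \sigma) = \tr A^\dagger$ together with the Cauchy--Schwarz bound $|\tr(\sigma A^\dagger \sigma A)| \le \|A\|_2^2$ gives $|\tr(\sigma V^\dagger \sigma V)|/d \ge (2t-1) - 2(1-t) = 4t-3 > 0$; since $(4t-3)^2 - (4t^2-3) = 12(t-1)^2 \ge 0$, the estimate follows, and hence $D(W_1\sigma W_1^\dagger, W_2\sigma W_2^\dagger)^2 \le 4(1-t^2) = 4\, D(W_1,W_2)^2$.

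For the inductive step, assume the statement at level $k-1 \ge 1$ and suppose $C_1, C_2 \in \cC_k$ both satisfy $D(U,C_i) < 2^{-(k-1/2)}$ but are distinct up to phase. The triangle inequality gives $D(C_1,C_2) < 2^{-(k-3/2)} \le 2^{-1/2} < 1$. I claim some generator $\sigma_g \in G$ has $C_1\sigma_g C_1^\dagger$ and $C_2\sigma_g C_2^\dagger$ distinct up to phase: by \lemref{UConjPauli} and \lemref{UConjGenerator} the generator images determine each $C_i$ up to phase, so if all of them were phase-equal then $C_2^\dagger C_1$ would conjugate every Pauli to $\pm$ itself; as the signs then define a homomorphism on the Paulis, this is conjugation by some fixed Pauli $\sigma_s$, so $C_2^\dagger C_1$ is a phase times $\sigma_s$, and then \lemref{DistFromInnerProd} with $\tr\sigma_s = 0$ for $\sigma_s \ne I$ would give either $C_1 = C_2$ up to phase or $D(C_1,C_2) = 1$, both contradicting the above. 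Fixing such a $\sigma_g$, the operators $C_1\sigma_g C_1^\dagger, C_2\sigma_g C_2^\dagger$ lie in $\cC_{k-1}$, are distinct up to phase, and by the key estimate each lies within $2\cdot 2^{-(k-1/2)} = 2^{-((k-1)-1/2)}$ of $U\sigma_g U^\dagger$ in $D$, contradicting the level-$(k-1)$ statement. The step I expect to be the main obstacle is the key estimate, and in particular securing the sharp constant $2$ rather than $2\sqrt{2}$ (a cruder $D^+$-based argument loses an extra $\sqrt{2}$ per level, which would break the exponent); the Pauli-factor subtlety is routine by comparison.
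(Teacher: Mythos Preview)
Your proof is correct and follows essentially the same inductive structure as the paper: the same base case, the same doubling estimate $D(W_1\sigma W_1^\dagger, W_2\sigma W_2^\dagger)\le 2D(W_1,W_2)$ as the induction engine, and the same split on whether $C_1,C_2$ differ only by a Pauli factor. The one notable difference is that your proof of the doubling estimate is harder than necessary---since $D$ is unitarily invariant and obeys the triangle inequality, writing $V=W_1 W_2^\dagger$ and $P=W_2\sigma W_2^\dagger$ gives $D(VPV^\dagger,P)=D(VP,PV)\le D(VP,P)+D(P,PV)=2D(V,I)=2D(W_1,W_2)$ in one line with no Cauchy--Schwarz bookkeeping (this is the paper's \lemref{CloseHaveClosePauli}), so the sharp constant $2$ you worried about comes for free.
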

The proof is in Appendix \ref{sec:ProofLemUniqueCloseC}.
\begin{theorem}
\label{thm:ApproxLearning}
Given oracle access to $U$ and $U^\dagger$ and $k$ such that $D(U, C) \le \eps$ for some $C \in \cC_k$ with
\be
\eps' := \sqrt{2(1-(2^{k-1} \eps)^2)} - 1 > 0
\ee
then $C$ can be determined with probability at least $1-\delta$ with $O\l(\frac{1}{\eps'^2} (2n)^{k-1} \log \frac{(2n+1)^{k-1}}{\delta}\r)$ queries.
\end{theorem}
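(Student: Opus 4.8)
The plan is to run essentially the exact-learning algorithm of \thmref{CkLearning}, but with each primitive ``Pauli-learning'' step replaced by an approximate, error-tolerant version obtained from a distinguishing subroutine, and then argue by induction on $k$ that the accumulated error stays below the uniqueness threshold of \lemref{UniqueCloseC} so that the correct element of $\cC_k$ is still singled out. First I would set up the base case: learning a Pauli up to phase when the oracle is only $\eps$-close to a Pauli. Applying the unknown unitary to half of the maximally entangled state $\ket{\psi}$ produces a state $\eps$-close (in the induced trace/fidelity sense implied by \lemref{DistFromInnerProd}) to one of the $4^n$ orthogonal states $(\sigma_p \ot I)\ket\psi$; since these are mutually orthogonal and the promise gap is $\eps' > 0$, a measurement in that basis identifies the right Pauli with bounded error probability, and $O(\frac{1}{\eps'^2}\log\frac{1}{\delta'})$ repetitions boost the success probability to $1-\delta'$. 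This is where the quantity $\eps' = \sqrt{2(1-(2^{k-1}\eps)^2)}-1$ enters: it is exactly the margin one gets after the $\ot$-with-conjugate ``doubling'' in \defref{DistanceMeasure} is unwound, and it is the effective distinguishing gap available at the bottom of the recursion.

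Next I would do the inductive step. To learn $C\in\cC_{k+1}$ that is $\eps$-close to a genuine $\cC_{k+1}$ element, apply $C\sigma_g C^\dagger$ for each of the $2n$ generators $\sigma_g\in G$; each such operator is close to an element of $\cC_k$, so by the inductive hypothesis we can identify it (up to phase) with high probability using the $\cC_k$-learning subroutine. Here I must check two things: (i) that conjugating the ``true'' nearby element $\tilde C\in\cC_{k+1}$ by $\sigma_g$ and then comparing with $C\sigma_g C^\dagger$ only incurs a distance blow-up controlled by the unitary invariance and triangle inequality for $D$ (both noted just before \lemref{UniqueCloseC}) — roughly a factor of $2$ per conjugation by $C$ and $C^\dagger$ — and (ii) that the resulting distance is still below $2^{-(k-1/2)}$ so \lemref{UniqueCloseC} applies at level $k$ and the answer is unambiguous. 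Then, exactly as in the last step of \thmref{CliffordLearning}, one fixes the relative phases $\alpha_i,\beta_i$ by one further (approximate) Pauli-learning call on $C'^\dagger C$.

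For the query count I would simply track the recursion. Let $Q(k)$ be the number of oracle uses; we get $Q(k+1) = 2n\cdot Q(k) + O\!\big(\frac{1}{\eps'^2}\log\frac{1}{\delta'}\big)$ from the $2n$ generator-conjugations plus the final phase-fixing call, with $Q(1)=O(\frac{1}{\eps'^2}\log\frac{1}{\delta'})$. To make the whole procedure succeed with probability $1-\delta$, union-bound over all $O((2n+1)^{k-1})$ leaf subroutine calls in the recursion tree, so each must fail with probability at most $\delta' = \delta/(2n+1)^{k-1}$, i.e.\ $\log\frac{1}{\delta'} = \log\frac{(2n+1)^{k-1}}{\delta}$. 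Unrolling the recurrence gives $Q(k) = O\!\big(\frac{1}{\eps'^2}(2n)^{k-1}\log\frac{(2n+1)^{k-1}}{\delta}\big)$, matching the claim. The main obstacle — the only genuinely non-routine part — is item (i)--(ii) above: showing that the per-level error really is multiplied only by a small constant under conjugation, so that after $k$ levels the total distance from the true $\cC_k$ element is still under the $2^{-(k-1/2)}$ uniqueness bound, and verifying that the promise $\eps' > 0$ is exactly the condition that makes this induction close. Everything else (the amplification, the union bound, solving the recurrence) is bookkeeping.
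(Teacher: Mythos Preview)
Your proposal is correct and follows essentially the same route as the paper: induction on $k$ with the base case being majority-vote approximate Pauli learning (the paper cites Proposition~21 of \cite{QBF}), the inductive step invoking the factor-of-$2$ distance blow-up under conjugation (this is exactly \lemref{CloseHaveClosePauli}, proved via the triangle inequality and unitary invariance just as you sketch), the final phase-fixing step via one more approximate Pauli-learning call on $C'^\dagger U$, and the same per-level $(2n+1)$ union bound yielding the stated query complexity. Your reading of $\eps'$ is also right: the condition $\eps'>0$ unwinds to $2^{k-1}\eps < 1/\sqrt{2}$, which is precisely the uniqueness threshold of \lemref{UniqueCloseC} applied after $k-1$ doublings.
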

\begin{proof}
By \lemref{UniqueCloseC}, $C$ is unique up to phase.  We now prove the Theorem by induction.

For $k=1$, use Proposition 21 of \cite{QBF} to learn the closest Pauli operator.  This works by repeating the Pauli learning method \thmref{PauliLearning} and taking the majority vote.  This uses $O\l(\frac{1}{\eps'^2} \log \frac{1}{\delta}\r)$ queries to succeed with probability at least $1-\delta$.

Now for the inductive step.  Assume we have a learning algorithm for level $k$.  Then for $C \in \cC_{k+1}$, let $C \sigma_{g_i} C^\dagger = C_{g_i}$ for $\sigma_{g_i} \in G$.  By \lemref{CloseHaveClosePauli}, we have $D(U \sigma_{g_i} U^\dagger, C_{g_i}) \le 2\eps$.  Use the learning algorithm for level $k$ to determine $C_{g_i}$ up to phase for all $i$.  Then to find the phases we use the same method as before: implement any $C'$ with $C' \sigma_{g_i} C'^\dagger = \pm C_{g_i}$ for any (known) choice of phase.  Then $C'= C \sigma_q$ for some Pauli operator $\sigma_q$.  We can determine $\sigma_q$ by implementing $C'^\dagger U$ and using the $k=1$ learning algorithm since
\ba
D(C'^\dagger U, \sigma_q) &= D(U, C' \sigma_q) \nonumber\\
&= D(U, C) \le \eps.
\ea
Now we calculate the success probabilities and number of queries.  There are $2n+1$ calls to the algorithm at lower levels, which all succeed with probability at least $1-\delta$.  So at this level the success probability is at least $1-(2n+1)\delta$.  So to succeed with probability at least $1-\delta$ we must replace $\delta$ with $\delta/(2n+1)$.  Then the overall number of queries is
\begin{multline}
2n \cdot O\l(\frac{1}{\eps'^2} (2n)^{k-1} \log \frac{(2n+1)^{k}}{\delta}\r) + 1 \\= O\l(\frac{1}{\eps'^2} (2n)^{k} \log \frac{(2n+1)^{k}}{\delta}\r).
\end{multline}
\end{proof}
We remark that there is only $O(k \log n)$ overhead (for constant $\eps'$ and $\delta$) over the exact learning algorithm of \thmref{CkLearning}.

\section{Clifford Testing}
\label{sec:CliffordTesting}

Here we present an efficient algorithm to determine whether an unknown unitary operation is close to a Clifford or far from every Clifford.  Whereas the previous results allow us to find the Clifford operator close to the given black box unitary, in this section we are concerned with determining how far the given unitary is from any Clifford.  We do not measure this directly, but provide an algorithm of low query complexity that decides if the given unitary is close to a Clifford or far from all.  This type of algorithm is known in computer science as a property testing algorithm and has many applications, including the theory of probabilistically checkable proofs \cite{PCPTheorem}.  The result in this section could be extended to work for any level of the Gottesman-Chuang hierarchy although for simplicity we only present the version for Cliffords.

The key ingredient to our method will be a way of estimating the Pauli coefficients:
\begin{lemma}[Lemma 23 of \cite{QBF}]
\label{lem:MeasurePauliCoeff}
For any $p \in \{I, x, y, z\}^n$ and unitary $U$, $|\gamma(p)| = \frac{1}{2^n} \l| \tr U \sigma_p \r|$ can be estimated to within $\pm \eta$ with probability $1-\delta$ using $O\l(\frac{1}{\eta^2} \log \frac{1}{\delta}\r)$ queries.
\end{lemma}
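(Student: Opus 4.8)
Recall from \secref{definitions} that writing $U$ in the Pauli basis, $U = \sum_p \gamma(p)\sigma_p$, and using $\tr\sigma_p\sigma_q = 2^n\delta_{pq}$ gives $\gamma(p) = \frac{1}{2^n}\tr(U\sigma_p)$, so that $|\gamma(p)| = \frac{1}{2^n}|\tr U\sigma_p|$ is precisely the quantity to estimate. The plan is to exhibit $\gamma(p)$ as an amplitude that can be sampled with one query to $U$ via a Hadamard test, estimate its real and imaginary parts separately with a Chernoff/Hoeffding bound, and then combine them.

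First I would reuse the maximally entangled state $\ket{\psi} = 2^{-n/2}\sum_i\ket{ii}$ of \thmref{PauliLearning}: exactly the computation in that proof, with $\sigma_p\sigma_q$ replaced by $U\sigma_p$, shows $\bra{\psi}(U\sigma_p\otimes I)\ket{\psi} = \frac{1}{2^n}\tr(U\sigma_p) = \gamma(p)$. Now run a Hadamard test on $\ket{\psi}$ with the unitary $V = U\sigma_p \otimes I$: adjoin an ancilla qubit, apply a Hadamard to it, apply the controlled operation $\proj{1}\otimes V$, apply a Hadamard to the ancilla again, and measure it in the computational basis; recording $+1$ for outcome $0$ and $-1$ for outcome $1$ produces a $\{\pm 1\}$-valued random variable with expectation $\Re\gamma(p)$. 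The controlled operation costs one controlled query to $U$ plus $O(n)$ Clifford gates, since controlled-$\sigma_p$ is a product of controlled single-qubit Paulis. Inserting a phase gate on the ancilla just before the final Hadamard yields instead a $\{\pm 1\}$-valued outcome whose expectation is $\Im\gamma(p)$, up to an overall sign that is immaterial.

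For the sample complexity, set $r := \Re\gamma(p)$ and $s := \Im\gamma(p)$, run each of the two experiments $M = O\!\left(\frac{1}{\eta^2}\log\frac{1}{\delta}\right)$ times, and form the empirical means $\hat r$ and $\hat s$. By Hoeffding's inequality for bounded random variables, a suitable constant in $M$ makes $|\hat r - r| \le \eta/2$ and $|\hat s - s| \le \eta/2$ each hold except with probability at most $\delta/2$. Output $\sqrt{\hat r^2 + \hat s^2}$ as the estimate of $|\gamma(p)| = \sqrt{r^2 + s^2}$; since $(a,b)\mapsto\sqrt{a^2+b^2}$ is $1$-Lipschitz in the Euclidean norm, on the event that both estimates are accurate the output lies within $\sqrt{(\eta/2)^2+(\eta/2)^2} = \eta/\sqrt{2} \le \eta$ of $|\gamma(p)|$, and a union bound bounds the total failure probability by $\delta$. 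The total query count is $2M = O\!\left(\frac{1}{\eta^2}\log\frac{1}{\delta}\right)$.

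This is essentially a Hadamard test composed with a Hoeffding estimate, so there is little deep obstacle; the points needing care are bookkeeping the phases that $\sigma_y$ contributes inside $\sigma_p$ (so that the sampled variable genuinely has the claimed expectation and stays in $\{\pm 1\}$), and the fact that the procedure uses a \emph{controlled} call to $U$. If one insists on a plain black-box $U$, one can instead note that $|\gamma(p)|^2 = |\bra{\psi}(U\sigma_p\otimes I)\ket{\psi}|^2$ is exactly the acceptance probability of the projective measurement of $(U\sigma_p\otimes I)\ket{\psi}$ onto $\ket{\psi}$, and estimate that Bernoulli parameter directly; but extracting $|\gamma(p)|$ to additive error $\eta$ from it by a square root costs $O\!\left(\frac{1}{\eta^4}\log\frac{1}{\delta}\right)$ queries, so it is the linear Hadamard-test estimator that delivers the stated $O\!\left(\frac{1}{\eta^2}\log\frac{1}{\delta}\right)$ bound.
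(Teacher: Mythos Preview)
Your Hadamard-test argument is correct and yields the stated $O\!\left(\frac{1}{\eta^2}\log\frac{1}{\delta}\right)$ bound. It is, however, a different route from the paper's. The paper simply reuses \thmref{PauliLearning}: apply $U$ (uncontrolled) to half of $\ket{\psi}$, measure in the Bell basis, and observe that the probability of the outcome labelled $p$ is exactly $|\gamma(p)|^2$; then estimate that Bernoulli parameter by repetition. Your method instead builds a linear estimator for $\Re\gamma(p)$ and $\Im\gamma(p)$ via a controlled call to $U$, which is clean but assumes a stronger oracle than the paper needs.

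Your closing remark that the Bernoulli route costs $O\!\left(\frac{1}{\eta^4}\log\frac{1}{\delta}\right)$ is too pessimistic and worth correcting, since it suggests the paper's method is weaker than yours when it is not. Hoeffding alone indeed gives only $|\hat q - q|\le\eta^2$, but a variance-aware bound (Bernstein, or the multiplicative Chernoff bound) exploits that a $\mathrm{Bernoulli}(q)$ sample has variance $q(1-q)\le q$: with $M=O\!\left(\frac{1}{\eta^2}\log\frac{1}{\delta}\right)$ samples one gets $|\hat q - q|\le \eta\sqrt{q}+\eta^2$ with probability $1-\delta$, and then
\[
\bigl|\sqrt{\hat q}-\sqrt{q}\,\bigr|=\frac{|\hat q-q|}{\sqrt{\hat q}+\sqrt{q}}\le\frac{\eta\sqrt{q}+\eta^2}{\sqrt{q}}\le 2\eta
\]
when $q\ge\eta^2$, while the case $q<\eta^2$ is handled directly. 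So the paper's uncontrolled-query method already matches the $O\!\left(\frac{1}{\eta^2}\log\frac{1}{\delta}\right)$ complexity; your approach buys a cleaner Hoeffding analysis at the price of controlled access to $U$.
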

This is a generalisation of \thmref{PauliLearning} and the method is similar.  Instead of there being only one possible outcome, now the probability of obtaining the outcome corresponding to $\sigma_p$ is estimated.  This probability is equal to $|\gamma(p)|^2$.
\begin{theorem}
\label{thm:CliffordTesting}
Given oracle access to $U$ and $U^\dagger$ with the promise that for $0 < \eps < 1$ either
\begin{enumerate}
\item[a)]{CLOSE: there exists $C \in \cC$ such that $D(U, C) \le \frac{\eps}{\sqrt{32}n}$ or}
\item[b)]{FAR: for all $C \in \cC$, $D(U,C) > \eps$ and there exists $C \in \cC$ such that $D(U, C) \le 1/3$}
\end{enumerate}
holds then there is a $O\l(\frac{n^3}{\eps^2} \log \frac{n}{\delta}\r)$ algorithm that determines which with probability at least $1-\delta$.
\end{theorem}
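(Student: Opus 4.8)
The plan is to reduce Clifford testing to $2n$ independent instances of Pauli testing, one per generator $\sigma_g\in G=\{\sigma_{x_i},\sigma_{z_i}\}_{i=1}^n$. The guiding fact is that $U$ is a Clifford exactly when every $U\sigma_g U^\dagger$ is (proportional to) a Pauli, and --- crucially for a quantitative statement --- that a small error on each $U\sigma_g U^\dagger$ translates into a controlled error on $U$ itself. The algorithm I propose: for each generator, run the Pauli-testing subroutine --- Pauli-sample $U\sigma_g U^\dagger$ (i.e.\ apply it to half of a maximally entangled state and Bell-measure, as in \thmref{PauliLearning}) some $\ell=O(n^2/\eps^2)$ times and accept iff all $\ell$ outcomes coincide, then amplify with $O(\log(n/\delta))$ repetitions --- and declare CLOSE iff all $2n$ subtests accept. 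One Pauli-sample of $U\sigma_g U^\dagger$ costs one query to $U$ and one to $U^\dagger$, so the total cost is $2n\cdot O\!\left(\frac{n^2}{\eps^2}\log\frac n\delta\right)=O\!\left(\frac{n^3}{\eps^2}\log\frac n\delta\right)$, and a union bound over the $2n$ subtests gives overall success probability $1-\delta$.

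The sample counts are chosen so that a single subtest, at cost $O(r^{-2}\log\delta^{-1})$, distinguishes ``$U\sigma_g U^\dagger$ is $r$-close to a Pauli'' (accept) from ``$U\sigma_g U^\dagger$ is more than $\rho$-far from every Pauli'' (reject) whenever $r$ is a fixed fraction ($\le 1/\sqrt3$ or so) of $\rho$: if $W$ has nearest-Pauli overlap $|\gamma(p^\ast)|^2=1-D^2$, then all $\ell$ samples agree with probability $\sum_p|\gamma(p)|^{2\ell}$, which is $\ge(1-r^2)^\ell$ when $D\le r$ and $\le(1-\rho^2)^{\ell-1}$ when $D>\rho$, and $\ell=\Theta(\rho^{-2})$ separates these. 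For the CLOSE case of the theorem, if $D(U,C)\le\frac{\eps}{\sqrt{32}\,n}$ then by \lemref{CloseHaveClosePauli} $D(U\sigma_g U^\dagger, C\sigma_g C^\dagger)\le\frac{2\eps}{\sqrt{32}\,n}=\frac{\eps}{2\sqrt2\,n}$, and since $C\sigma_g C^\dagger$ is $\pm$ a Pauli and $\frac{\eps}{2\sqrt2\,n}$ is a factor $2$ below the reject radius $\rho=\frac{\eps}{\sqrt2\,n}$, every subtest accepts with high probability.

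The real content is the FAR case, which I treat by contrapositive: if all $2n$ subtests accept I must show $D(U,C)\le\eps$ for some $C\in\cC$, contradicting the FAR promise. An accepting subtest yields (with high probability) a Pauli $\sigma_{a_g}$ with $D(U\sigma_g U^\dagger,\sigma_{a_g})\le\rho=\frac{\eps}{\sqrt2\,n}$, and since $U\sigma_g U^\dagger$ and $\sigma_{a_g}$ are Hermitian the optimal phase is $\pm1$, so $\|U\sigma_g U^\dagger-s_g\sigma_{a_g}\|_2\le\sqrt{2^{n+1}}\,\rho=:\xi$ for some $s_g=\pm1$. First, the $\sigma_{a_g}$ must satisfy \emph{exactly} the same (anti)commutation relations as the $\sigma_g$: conjugation by $U$ preserves these relations, they take only the values $\pm1$, and a $2$-norm perturbation of size $O(\xi)$ cannot turn a commuting into an anticommuting pair of Paulis (which differ by $2$-norm $2\sqrt{2^n}$) once $\rho$ is below a fixed constant --- which it is. Hence there is a Clifford $C_0$ with $C_0\sigma_g C_0^\dagger=\sigma_{a_g}$, and --- because $G$ is a symplectic basis --- a single Pauli $\sigma_q$ such that $C:=C_0\sigma_q$ has $C\sigma_g C^\dagger=s_g\sigma_{a_g}$ with the prescribed signs. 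Putting $V:=C^\dagger U$ gives $\|V\sigma_g V^\dagger-\sigma_g\|_2=\|U\sigma_g U^\dagger-s_g\sigma_{a_g}\|_2\le\xi$ for every $g$. To convert this into a bound on $D(U,C)=D(V,I)$, write any $\sigma_p\in\hat{\cP}$ as a product of at most $2n$ generators and telescope: by Cauchy--Schwarz $\|V\sigma_p V^\dagger-\sigma_p\|_2^2\le 2n\sum_{g\in p}\|V\sigma_g V^\dagger-\sigma_g\|_2^2$; summing over all $4^n$ Paulis and using that each generator occurs in exactly half of them gives $\sum_p\|V\sigma_p V^\dagger-\sigma_p\|_2^2\le n\,4^n\sum_g\|V\sigma_g V^\dagger-\sigma_g\|_2^2\le 2n^2\,4^n\,\xi^2$. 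On the other hand, expanding $V$ in the Pauli basis and using $\sigma_p\sigma_q=(-1)^{\langle p,q\rangle}\sigma_q\sigma_p$ with orthogonality of the $\sigma_p$ gives the exact identity
\bes
\sum_p\|V\sigma_p V^\dagger-\sigma_p\|_2^2 = 2\cdot 8^n\left(1-\left|\frac{\tr V}{2^n}\right|^2\right)=2\cdot 8^n\,D(V,I)^2 .
\ees
Combining, $D(U,C)^2\le\frac{2n^2\,4^n\,\xi^2}{2\cdot 8^n}=\frac{n^2\xi^2}{2^n}=2n^2\rho^2$, so $D(U,C)\le\sqrt2\,n\,\rho=\eps$; tightening the subtest thresholds by a fixed fraction makes this strict and closes the contradiction. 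This also pins down the close radius $\frac{\eps}{\sqrt{32}\,n}$, which must sit a fixed constant factor below $\rho$ so as to absorb both the doubling of \lemref{CloseHaveClosePauli} and the gap required by the subtest; and the auxiliary promise $D(U,C)\le 1/3$ in the FAR case keeps us below the uniqueness radius $2^{-3/2}$ of \lemref{UniqueCloseC} (with $k=2$) and each $U\sigma_g U^\dagger$ within bounded distance of a Pauli, so the subroutine is always used inside its promised regime.

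The main obstacle is exactly the local-to-global passage of the previous paragraph: converting the $2n$ per-generator errors into an error on $U$ with only an $O(n)$ loss. A term-by-term telescoping bound on each $\|V\sigma_p V^\dagger-\sigma_p\|_2$ in isolation is too crude; the trick is to pair telescoping with the exact Pauli-basis identity above, so that the total loss over the exact learning algorithm of \thmref{CliffordLearning} is the single factor $n$ visible in $n^3/\eps^2$, and to check that the resulting constants line up with the stated $\frac{\eps}{\sqrt{32}\,n}$. Two smaller points still need care: the quantitative argument that approximate commutation of the $U\sigma_g U^\dagger$ forces exact commutation of the $\sigma_{a_g}$, and that one Pauli correction simultaneously realises all $2n$ prescribed signs.
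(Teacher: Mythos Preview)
Your argument is correct and takes a genuinely different route from the paper. The paper first \emph{learns} the candidate Clifford $C$ via \thmref{ApproxLearning} (this is precisely where the auxiliary promise $D(U,C)\le 1/3$ enters), and then estimates the specific coefficients $\l|\tr U\sigma_g U^\dagger\, C\sigma_g C^\dagger\r|/2^n$; its FAR analysis runs through the contrapositives of \lemref{ClosePauliAreClose} and \lemref{CloseGenAreClosePauli}. You skip the learning step entirely: you run an agnostic Pauli test on each $U\sigma_g U^\dagger$, and for the FAR direction you \emph{construct} the Clifford from the nearest Paulis and control $D(U,C)$ via the exact identity $\sum_p\|V\sigma_p V^\dagger-\sigma_p\|_2^2=2\cdot 8^n\,D(V,I)^2$ paired with telescoping over generators. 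Your identity is exactly the averaging that underlies the proof of \lemref{ClosePauliAreClose} (indeed it says $D(V,I)^2$ equals the \emph{average} of $D^+(V\sigma_p V^\dagger,\sigma_p)^2$ over $p$), and your telescoping plays the role of \lemref{CloseGenAreClosePauli}; so the two proofs share the same skeleton but package the local-to-global step differently. What your route buys: it is self-contained (no call to \thmref{ApproxLearning}), and the $D(U,C)\le 1/3$ clause in the FAR promise becomes almost superfluous --- in the paper it is essential for the learning step, whereas in your argument its only effect is to force $\eps<1/3$ so that $\rho<1/(2\sqrt2)$ and the commutation-rigidity step goes through, a constraint that is automatic once $n\ge 2$. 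Your closing remark that this promise keeps the ``subroutine inside its promised regime'' is therefore off the mark: your Pauli-sampling subtest has no such regime, and you should instead point to the rigidity step as the (mild) place where the bound is used.
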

\begin{proof}
In both cases, we have that $D(U, C) < 1/3$ for some $C$, which ensures that $C$ is unique (using \lemref{UniqueCloseC}, since $\frac{1}{3} < \frac{1}{2 \sqrt{2}}$) and can be found using \thmref{ApproxLearning} with $O\l(n \log \frac{n}{\delta} \r)$ queries.  Then the algorithm is:
\begin{enumerate}
\item{For each $\sigma_g \in G$, measure the Pauli coefficient of $C \sigma_g C^\dagger$ in $U \sigma_g U^\dagger$ (i.e.~measure $\l| \tr U \sigma_g U^\dagger C \sigma_g C^\dagger \r|/2^n$) to precision $\frac{\eps^2}{16 n^2}$ using \lemref{MeasurePauliCoeff}.}
\item{If all the coefficients are found to have modulus at least $1 - \frac{3 \eps^2}{16 n^2}$ then output \emph{CLOSE} else output \emph{FAR}.}
\end{enumerate}
This works because, for the two possibilities \emph{CLOSE} and \emph{FAR}:
\begin{enumerate}
\item[a)]{Using \lemref{CloseHaveClosePauli}, $D(U,C) \le \frac{\eps}{\sqrt{32} n}$ implies that for all $\sigma_p \in \hat{\cP},$
\be D(U \sigma_p U^\dagger, C \sigma_p C^\dagger) \le \frac{2\eps}{\sqrt{32}n}. \ee
Since we will only apply $U \sigma_g U^\dagger$ for $\sigma_g \in G$ we restrict this to only the generators to find that for all $\sigma_g \in G,$
\be D(U \sigma_g U^\dagger, C \sigma_g C^\dagger) \le \frac{2\eps}{\sqrt{32}n} \ee
giving
\be \l|\frac{\tr U \sigma_g U^\dagger C \sigma_g C^\dagger}{2^n} \r|^2 \ge 1 - \frac{\eps^2}{8 n^2}
\ee
for every generator $\sigma_g$.  We need a bound on the non-squared coefficients, which follows directly:
\be \l|\frac{\tr U \sigma_g U^\dagger C \sigma_g C^\dagger}{2^n} \r| \ge 1 - \frac{\eps^2}{8 n^2}.
\ee
Therefore when measuring the coefficients to precision $\frac{\eps^2}{16 n^2}$, all results will give at least $1 - \frac{3\eps^2}{16 n^2}$.}
\item[b)]{
Using the contrapositive of \lemref{ClosePauliAreClose}, $D(U, C) > \eps$ implies that there exists $\sigma_p \in \hat{\cP}$ such that
\be D^+(U \sigma_p U^\dagger, C \sigma_p C^\dagger) > \eps. \ee
Using the contrapositive of \lemref{CloseGenAreClosePauli} this in turn implies there exists $\sigma_g \in G$ such that
\be D^+(U \sigma_g U^\dagger, C \sigma_g C^\dagger) > \frac{\eps}{2n}, \ee
which means that for at least one $\sigma_g \in G$, $U \sigma_g U^\dagger$ will have a small overlap with $C \sigma_g C^\dagger$ i.e.~there exists $\sigma_g \in G$ such that
\be
\l| \frac{\tr U \sigma_g U^\dagger C \sigma_g C^\dagger}{2^n} \r| < 1- \frac{\eps^2}{4n^2}.
\ee
The $C$ returned by the application of \thmref{ApproxLearning} is such that $\tr U \sigma_g U^\dagger C \sigma_g C^\dagger$ is positive, which justifies inserting the absolute value signs above when using $D^+$ rather than $D$.  This implies that at least one coefficient will be found to be less than $1- \frac{3\eps^2}{16 n^2}$ when measuring to precision $\frac{\eps^2}{16n^2}$.\qedhere
}
\end{enumerate}
\end{proof}

\section{Conclusions and Further Work}
\label{sec:Conclusions}

We have shown how to exactly identify an unknown Clifford operator in $O(n)$ queries, which we show is optimal.  This is then extended to cover elements of the $\cC_k$ hierarchy and for unitaries that are only known to be close to $\cC_k$ operations.  The key to the Clifford learning algorithm is to apply $C \sigma_p C^\dagger$ and then find the resulting Pauli operator.

A way of extending this idea could be to learn unitaries from larger sets.  Suppose $\cS$ is a set of unitaries with the property that for every $S \in \cS$, $S \sigma_p S^\dagger$ is a linear combination of a constant number of Paulis.  Then $S$ can be learnt in the same way as above, using the quantum Goldreich-Levin algorithm of \cite{QBF}, which can efficiently find which Paulis have large overlap with an input unitary.  However, we have not been able to find interesting sets $\cS$ other than the Clifford group with this property.

We also presented a Clifford testing algorithm, which determines whether a given black-box unitary is close to a Clifford or far from every Clifford.  This can be seen as a quantum generalisation of quadratic testing, just as Pauli testing can be seen as a quantum generalisation of linearity testing.  Property testing of this form is used to prove the PCP theorem \cite{PCPTheorem} so these quantum testing results could potentially be useful in proving a quantum PCP theorem.  It would also be interesting to strengthen the testing method in \thmref{CliffordTesting} to remove the $O(1/n)$ difference between the close and far conditions.

Finally, it would be interesting to see if it is possible to remove the requirement to have access to $U^\dagger$.  However, using both $U$ and $U^\dagger$ is the key to our method so we do not know if a method without $U^\dagger$ is possible with low query complexity.

\begin{acknowledgments}
I am grateful for funding from the U.K. Engineering and Physical Science Research Council through ``QIP IRC.''   I thank Aram Harrow for helpful discussions and suggestions and for comments on an earlier draft of this manuscript.  I also thank Ashley Montanaro for useful discussions and comments on the manuscript and Michael Bremner for helpful comments.
\end{acknowledgments}

\appendix
\section{Proof of \lemref{UniqueCloseC}}
\label{sec:ProofLemUniqueCloseC}

\begin{proof}[Proof of \lemref{UniqueCloseC}]
The proof is by induction.  The base case is for $k=1$ when we have the Pauli group.  Without loss of generality, assume $C$ is a Pauli operator with no phase.  Let $C = \sigma_p$.

Expand $U$ in the Pauli basis:
\be
U = \sum_q \gamma(q) \sigma_q.
\ee
Since $U$ is unitary, we have $\sum_q | \gamma(q) |^2 = 1$.  By \lemref{DistFromInnerProd},
\be
D(U, \sigma_p)^2 = 1 - \l| \frac{\tr \sigma_p U}{2^n} \r|^2
\ee
which implies
\be
|\gamma(p)|^2 \ge 1- \eps^2.
\ee

Now, suppose for contradiction that there exists $\sigma_{p_1} \ne \sigma_{p_2}$ with $D(U, \sigma_{p_1}) \le \eps$ and $D(U, \sigma_{p_2}) \le \eps$.  Then by the above, $|\gamma(p_1)|^2, |\gamma(p_2)|^2 \ge 1- \eps^2$.  But there is also the constraint $|\gamma(p_1)|^2 + |\gamma(p_2)|^2 \le 1$ which combined give
\be
\eps \ge \frac{1}{\sqrt{2}}
\ee
which is false by assumption.  This implies $\sigma_{p_1} = \sigma_{p_2}$, which proves the base case.

To prove the inductive step, again assume for contradiction that there exist $C_1, C_2 \in \cC_{k+1}$ with $C_1 \ne C_2$ and $D(U, C_1) \le \eps$ and $D(U, C_2) \le \eps$.  Then there exists $\sigma_g \in G$ with
\be
C_1 \sigma_g C_1^\dagger =: C_{1g} \ne C_{2g} := C_2 \sigma_g C_2^\dagger.
\ee
Here, $C_{1g}, C_{2g} \in \cC_k$.

Using \lemref{CloseHaveClosePauli}, $D(U \sigma_g U^\dagger, C_{1g}) \le 2\eps$ and $D(U \sigma_g U^\dagger, C_{2g}) \le 2\eps$.

Now there are two cases.  Firstly, suppose we can choose $\sigma_g$ such that $C_1 \sigma_g C_1^\dagger \ne \pm C_2 \sigma_g C_2^\dagger$.  Then $C_{1g}$ and $C_{2g}$ are not equivalent up to phase so, using the inductive hypothesis, we must have
\be
2 \eps \ge \frac{1}{2^{k-1/2}}
\ee
or
\be
\eps \ge \frac{1}{2^{(k+1)-1/2}}
\ee
which is again false by assumption.

For the other case, $C_1 \sigma_g C_1^\dagger = \pm C_2 \sigma_g C_2^\dagger$ for all $\sigma_g \in G$.  This implies that $C_2 = C_1 \sigma_q$ for some Pauli $\sigma_q \ne I$.  Then we have 
\ba
D(U, C_1) &\le \eps\nonumber\\
D(U, C_1 \sigma_q) &\le \eps
\ea
which by unitary invariance gives
\ba
D(C_1^\dagger U, I) &\le \eps\nonumber\\
D(C_1^\dagger U, \sigma_q) &\le \eps.
\ea
But we proved that this is impossible in this range of $\eps$ in the $k=1$ proof above.
\end{proof}

\section{Miscellaneous Lemmas}

Here we prove some miscellaneous lemmas used earlier in the paper.

The first lemma says that for two close operators $U_1$ and $U_2$, $U_1 \sigma_p U_1^\dagger$ is close to $U_2 \sigma_p U_2^\dagger$ for all Paulis $\sigma_p$:
\begin{lemma}
\label{lem:CloseHaveClosePauli}
If $D(U_1, U_2) \le \delta$ then for all $\sigma_p \in \hat{\cP}$,
\bes
D(U_1 \sigma_p U_1^\dagger, U_2 \sigma_p U_2^\dagger) \le 2 \delta.
\ees
\end{lemma}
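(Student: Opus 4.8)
The plan is to reduce the statement to an inner-product computation using \lemref{DistFromInnerProd}, which tells us that $D(U_1\sigma_p U_1^\dagger, U_2 \sigma_p U_2^\dagger)^2 = 1 - |\tr(U_1 \sigma_p U_1^\dagger \sigma_p U_2 U_2^\dagger)/2^n|^2$ — wait, more carefully, $D(A,B)^2 = 1 - |\tr(A B^\dagger)/d|^2$, so with $A = U_1 \sigma_p U_1^\dagger$ and $B = U_2 \sigma_p U_2^\dagger$ (both Hermitian, so $B^\dagger = B$) we get $D(A,B)^2 = 1 - |\tr(U_1 \sigma_p U_1^\dagger U_2 \sigma_p U_2^\dagger)/2^n|^2$. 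So the whole lemma becomes a statement about how close this trace is to $2^n$ given that $|\tr(U_1 U_2^\dagger)/2^n|$ is close to $1$.

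First I would set $V := U_2^\dagger U_1$, so that $D(U_1,U_2) \le \delta$ translates (by unitary invariance and \lemref{DistFromInnerProd}) into $|\tr V / 2^n|^2 \ge 1 - \delta^2$, i.e. $|\tr V| \ge 2^n \sqrt{1-\delta^2}$. Multiplying $V$ by a global phase (which changes neither $D$ nor the quantity to be bounded, since $U_1 \sigma_p U_1^\dagger$ is phase-insensitive) I may assume $\tr V = |\tr V|$ is real and nonnegative. Then the quantity controlling the left-hand side is $\tr(U_1 \sigma_p U_1^\dagger U_2 \sigma_p U_2^\dagger) = \tr(V^\dagger \sigma_p V \sigma_p)$ after conjugating by $U_2^\dagger$ inside the trace. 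Writing $W := \sigma_p V \sigma_p$, which is again unitary with $\tr W = \tr V$ (conjugation preserves trace), the target is to lower-bound $|\tr(V^\dagger W)|/2^n$ knowing that $\tr V$ and $\tr W$ are both close to $2^n$.

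The key step is then: if $V,W$ are unitaries on a $d$-dimensional space with $\Re \tr V \ge d(1-\delta^2)$ and $\Re \tr W \ge d(1-\delta^2)$, then $V$ and $W$ are each close to the identity in the $D^+$ metric, so by the triangle inequality $D^+(V,W)$ is small, and hence $|\tr(V^\dagger W)|/d$ is close to $1$. Concretely, $D^+(V, I) = \sqrt{1 - \Re\tr V/d} \le \delta$ and likewise $D^+(W,I) \le \delta$; the triangle inequality for $D^+$ (noted in the text right after \lemref{DistFromInnerProd}) gives $D^+(V,W) \le 2\delta$, and since $D(V,W) \le D^+(V,W)$ always (because $|z|^2 \ge (\Re z)^2$ in the formulas of \lemref{DistFromInnerProd}), we get $D(V,W) \le 2\delta$, which unwinds to exactly $D(U_1\sigma_p U_1^\dagger, U_2\sigma_p U_2^\dagger) \le 2\delta$.

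The main obstacle I anticipate is the bookkeeping around the global phase: $D(U_1,U_2) \le \delta$ only controls $|\tr(U_1 U_2^\dagger)|$, not its real part, so one must first rotate the phase to make the trace real before invoking the $D^+$ triangle inequality — and one has to check that this same phase choice is harmless for the conjugated operators, which it is precisely because conjugation $\sigma_p (\cdot) \sigma_p$ commutes with scalar multiplication and $U_i \sigma_p U_i^\dagger$ is insensitive to the phase of $U_i$. The rest is routine manipulation of the two formulas in \lemref{DistFromInnerProd} and two applications of the triangle inequality.
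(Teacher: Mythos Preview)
Your argument has a genuine gap at the last step. The inequality $D(V,W) \le D^+(V,W)$ is \emph{not} true in general, and your justification ``$|z|^2 \ge (\Re z)^2$'' does not establish it: writing $z = \tr(V W^\dagger)/d$, one has $D(V,W)^2 = 1 - |z|^2$ and $D^+(V,W)^2 = 1 - \Re z$, so $D \le D^+$ would require $|z|^2 \ge \Re z$, not $|z|^2 \ge (\Re z)^2$. In the situation at hand $z$ is real (as you can check) and lies in $(0,1)$ for small $\delta$, where $|z|^2 = z^2 < z = \Re z$; so in fact $D(V,W) > D^+(V,W)$ here. If you instead use $|z| \ge \Re z \ge 1 - 4\delta^2$ you only recover $D(V,W) \le 2\sqrt{2}\,\delta$, which is weaker than the lemma's bound.

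The fix is to drop the detour through $D^+$ altogether and apply the triangle inequality directly to $D$, which (as noted after \lemref{DistFromInnerProd}) is also unitarily invariant and satisfies the triangle inequality. With $V = U_1 U_2^\dagger$ and $U_{2p} := U_2 \sigma_p U_2^\dagger$ one gets
\[
D(U_1 \sigma_p U_1^\dagger, U_2 \sigma_p U_2^\dagger)
= D(V U_{2p} V^\dagger, U_{2p})
= D(V U_{2p}, U_{2p} V)
\le D(V U_{2p}, U_{2p}) + D(U_{2p}, U_{2p} V)
= 2 D(V, I) = 2 D(U_1,U_2),
\]
with no phase adjustment needed. This is precisely the paper's argument.
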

\begin{proof}
Let $U_1 = V U_2$ and $U_{2p} = U_2 \sigma_p U_2^\dagger$.  Then we simply apply the triangle inequality for $D$ and unitary invariance:
\bas
D(U_1 \sigma_p U_1^\dagger, U_2 \sigma_p U_2^\dagger) &= D(V U_{2p} V^\dagger, U_{2p}) \\
&= D(V U_{2p}, U_{2p} V) \\
&\le D(V U_{2p}, U_{2p}) + D(U_{2p}, U_{2p} V) \\
&= D(V, I) + D(I, V) \\
&= 2D(U_1, U_2).\qedhere
\eas
\end{proof}
The next lemma is a converse to this:
\begin{lemma}
\label{lem:ClosePauliAreClose}
If for all $\sigma_p \in \hat{\cP}$
\be D^+(U_1 \sigma_p U_1^\dagger, U_2 \sigma_p U_2^\dagger) \le \delta \ee
then
\be
D(U_1, U_2) \le \delta.
\ee
\end{lemma}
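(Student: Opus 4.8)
The plan is to convert everything into statements about inner products via Lemma~\ref{lem:DistFromInnerProd}, and then exploit the orthogonality relations of the Pauli basis. First I would set $V := U_2^\dagger U_1$, so that by unitary invariance the hypothesis becomes $D^+(V \sigma_p V^\dagger, \sigma_p) \le \delta$ for all $\sigma_p \in \hat{\cP}$, and the desired conclusion becomes $D(V, I) \le \delta$, i.e.\ $|\tr V / 2^n| \ge \sqrt{1-\delta^2}$. Using Lemma~\ref{lem:DistFromInnerProd}, the hypothesis for each $p$ says
\[
\Re \frac{\tr (V \sigma_p V^\dagger \sigma_p)}{2^n} \ge 1 - \delta^2 .
\]
So the whole problem reduces to relating $\sum_p \tr(V \sigma_p V^\dagger \sigma_p)$ (or its real part) to $|\tr V|^2$.

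The key identity I would use is the ``swap trick'' for the Pauli basis: $\frac{1}{2^n}\sum_{p} \sigma_p \ot \sigma_p = \swap$, the swap operator on two $n$-qubit registers (this is the standard fact that the maximally entangled operator decomposes in the Pauli basis, and follows from $\tr \sigma_p \sigma_q = 2^n \delta_{pq}$ already noted in the paper). Taking $A \ot B$ against both sides and using $\tr[(A\ot B)\swap] = \tr(AB)$ gives $\sum_p \tr(A\sigma_p B \sigma_p) = 2^n \tr(A)\tr(B)$. Applying this with $A = V$, $B = V^\dagger$ yields
\[
\sum_{p \in \{I,x,y,z\}^n} \tr\!\big(V \sigma_p V^\dagger \sigma_p\big) \;=\; 2^n \,|\tr V|^2 .
\]
Now I would divide by $2^n \cdot 4^n$ and take real parts: the left side is the average over the $4^n$ Paulis of $\Re\big(\tr(V\sigma_p V^\dagger \sigma_p)/2^n\big)$, each term of which is $\ge 1-\delta^2$ by the hypothesis, so the average is $\ge 1-\delta^2$; the right side is $|\tr V / 2^n|^2$. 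Hence $|\tr V / 2^n|^2 \ge 1-\delta^2$, which is exactly $D(V,I) = \sqrt{1 - |\tr V/2^n|^2} \le \delta$, i.e.\ $D(U_1,U_2)\le\delta$ by unitary invariance.

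The main obstacle is really just getting the Pauli resolution-of-identity / swap identity stated and applied cleanly — in particular being careful that $\hat{\cP}$ here means the $4^n$ unsigned tensor products (so the normalization $\frac1{2^n}\sum_p \sigma_p\ot\sigma_p = \swap$ is correct) and that the averaging step is legitimate because every one of the $4^n$ summands is individually bounded below by the hypothesis. One subtlety worth a remark: the hypothesis is phrased with $D^+$ (which sees the real part of the trace) while the conclusion is phrased with $D$ (which sees only the modulus); the argument above is consistent with this, since I only ever need a lower bound on $\Re \tr(\cdots)$ on the left and only obtain a bound on $|\tr V|$ on the right, so no extra phase bookkeeping is needed.
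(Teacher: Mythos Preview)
Your proof is correct and is essentially the same as the paper's. The paper averages the hypothesis over all $\sigma_p\in\hat{\cP}$ and invokes the 1-design identity $\frac{1}{4^n}\sum_p \sigma_p A \sigma_p = \frac{I}{2^n}\tr A$ directly (without first substituting $V=U_2^\dagger U_1$), whereas you use the equivalent swap formulation $\frac{1}{2^n}\sum_p \sigma_p\ot\sigma_p=\swap$; these are two phrasings of the same Pauli completeness relation, and the remainder of the argument is identical.
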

\begin{proof}
If $D^+(U_1 \sigma_p U_1^\dagger, U_2 \sigma_p U_2^\dagger) \le \delta$ then $\frac{1}{2^n} \Re \tr U_1 \sigma_p U_1^\dagger U_2 \sigma_p U_2^\dagger \ge 1-\delta^2$.  Since this is true for all $\sigma_p$, we can take the average of this over the whole of $\hat{\cP}$ and use the fact that for any $d \times d$ matrix $A$ $\frac{1}{4^n}\sum_{\sigma_p \in \hat{\cP}} \sigma_p A \sigma_p = \frac{I}{2^n} \tr A$ (the Paulis are a \emph{1-design}) to find
\be
\frac{1}{2^n} \Re \tr U_1 \l( \frac{I}{2^n} \tr U_1^\dagger U_2 \r) U_2^\dagger \ge 1 - \delta^2
\ee
which simplified gives
\be
\l| \frac{\tr U_1 U_2^\dagger}{2^n} \r|^2 \ge 1 - \delta^2
\ee
giving the desired result.
\end{proof}
Now we show how to go from distances for just the generators $G$ to distances for the whole of $\hat{\cP}$:
\begin{lemma}
\label{lem:CloseGenAreClosePauli}
If for all $\sigma_g \in G$
\be
D^+(U_1 \sigma_g U_1^\dagger, U_2 \sigma_g U_2^\dagger) \le \delta
\ee
then for all $\sigma_p \in \hat{\cP}$
\be
D^+(U_1 \sigma_p U_1^\dagger, U_2 \sigma_p U_2^\dagger) \le 2n\delta
\ee
\end{lemma}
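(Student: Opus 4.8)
The plan is to decompose an arbitrary $\sigma_p \in \hat{\cP}$ as a product of at most $2n$ elements of $G$ (up to a phase) and then bound the distance between the two resulting products of conjugated operators by a telescoping argument. First I would write $\sigma_p = \alpha\,\sigma_{g_1}\cdots\sigma_{g_m}$ with each $\sigma_{g_j}\in G$ and $\alpha$ a phase; since a single-qubit $\sigma_y$ equals $\sigma_x\sigma_z$ up to phase, at most $n$ factors of $\sigma_{x_i}$-type and $n$ of $\sigma_{z_i}$-type suffice, so $m\le 2n$. Conjugation is multiplicative, hence $U_1\sigma_pU_1^\dagger = \alpha\,A_1\cdots A_m$ and $U_2\sigma_pU_2^\dagger = \alpha\,B_1\cdots B_m$ where $A_j:=U_1\sigma_{g_j}U_1^\dagger$ and $B_j:=U_2\sigma_{g_j}U_2^\dagger$ are unitary with $D^+(A_j,B_j)\le\delta$ by hypothesis. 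Since $|\alpha|=1$ and $D^+$ depends only on the $2$-norm of the difference, the common phase $\alpha$ is irrelevant, so it remains to bound $D^+(A_1\cdots A_m,\,B_1\cdots B_m)$.

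The key step is sub-additivity of $D^+$ under products of unitaries: $D^+(A_1\cdots A_m,\,B_1\cdots B_m)\le\sum_{j=1}^m D^+(A_j,B_j)$. I would prove this by a hybrid argument, swapping $A_j\to B_j$ one factor at a time: the $j$-th swap changes the first argument from $B_1\cdots B_{j-1}A_jA_{j+1}\cdots A_m$ to $B_1\cdots B_{j-1}B_jA_{j+1}\cdots A_m$, and by the triangle inequality for $D^+$ together with its unitary invariance (both recorded just after \lemref{DistFromInnerProd}) --- multiplying on the left by $(B_1\cdots B_{j-1})^\dagger$ and on the right by $(A_{j+1}\cdots A_m)^\dagger$ --- this step costs exactly $D^+(A_j,B_j)$. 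Summing over $j$ and invoking the triangle inequality gives the bound; a short induction on $m$ makes this rigorous.

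Putting the pieces together yields $D^+(U_1\sigma_pU_1^\dagger,U_2\sigma_pU_2^\dagger)\le m\delta\le 2n\delta$. I do not anticipate a genuine obstacle: the only points needing care are verifying that every element of $\hat{\cP}$ really is a product of at most $2n$ generators (immediate from the qubit-by-qubit structure of Pauli operators) and confirming that $D^+$ is unchanged by a common unimodular scalar (clear from the definition). A less slick alternative to the hybrid telescoping would be to iterate the single-step estimate $\|A_1A_2-B_1B_2\|_2\le\|A_1-B_1\|_2\,\|A_2\|+\|B_1\|\,\|A_2-B_2\|_2$ with the operator-norm bound $\|A_j\|=\|B_j\|=1$, but the unitary-invariance route is cleaner and needs no separate norm inequalities.
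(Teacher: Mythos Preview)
Your proposal is correct and is essentially the same approach as the paper's: the paper's one-line proof says ``by induction on the number of generators required to make $\sigma_p$, using the triangle inequality for $D^+$,'' which is exactly your decomposition into at most $2n$ generators followed by the telescoping/hybrid argument. Your write-up simply makes explicit the unitary invariance step that the paper leaves implicit.
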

\begin{proof}
The proof is by induction on the number of generators required to make $\sigma_p$, using the triangle inequality for $D^+$.
\end{proof}


\begin{thebibliography}{17}
\expandafter\ifx\csname natexlab\endcsname\relax\def\natexlab#1{#1}\fi
\expandafter\ifx\csname bibnamefont\endcsname\relax
  \def\bibnamefont#1{#1}\fi
\expandafter\ifx\csname bibfnamefont\endcsname\relax
  \def\bibfnamefont#1{#1}\fi
\expandafter\ifx\csname citenamefont\endcsname\relax
  \def\citenamefont#1{#1}\fi
\expandafter\ifx\csname url\endcsname\relax
  \def\url#1{\texttt{#1}}\fi
\expandafter\ifx\csname urlprefix\endcsname\relax\def\urlprefix{URL }\fi
\providecommand{\bibinfo}[2]{#2}
\providecommand{\eprint}[2][]{\url{#2}}

\bibitem[{\citenamefont{Nielsen and Chuang}(2000)}]{NielsenChuang}
\bibinfo{author}{\bibfnamefont{M.~A.} \bibnamefont{Nielsen}} \bibnamefont{and}
  \bibinfo{author}{\bibfnamefont{I.~L.} \bibnamefont{Chuang}},
  \emph{\bibinfo{title}{Quantum Computation and Quantum Information}}
  (\bibinfo{publisher}{{Cambridge University Press}}, \bibinfo{year}{2000}).

\bibitem[{\citenamefont{{Chuang} and {Nielsen}}(1997)}]{ChuangNielsen97}
\bibinfo{author}{\bibfnamefont{I.~L.} \bibnamefont{{Chuang}}} \bibnamefont{and}
  \bibinfo{author}{\bibfnamefont{M.~A.} \bibnamefont{{Nielsen}}},
  \bibinfo{journal}{Journal of Modern Optics} \textbf{\bibinfo{volume}{44}},
  \bibinfo{pages}{2455} (\bibinfo{year}{1997}),
  \eprint{arXiv:quant-ph/9610001}.

\bibitem[{\citenamefont{Poyatos et~al.}(1997)\citenamefont{Poyatos, Cirac, and
  Zoller}}]{PoyatosCiracZoller97}
\bibinfo{author}{\bibfnamefont{J.~F.} \bibnamefont{Poyatos}},
  \bibinfo{author}{\bibfnamefont{J.~I.} \bibnamefont{Cirac}}, \bibnamefont{and}
  \bibinfo{author}{\bibfnamefont{P.}~\bibnamefont{Zoller}},
  \bibinfo{journal}{Phys. Rev. Lett.} \textbf{\bibinfo{volume}{78}},
  \bibinfo{pages}{390} (\bibinfo{year}{1997}), \eprint{arXiv:quant-ph/9611013}.

\bibitem[{\citenamefont{D'Ariano and Lo~Presti}(2001)}]{DArianoPresti01}
\bibinfo{author}{\bibfnamefont{G.~M.} \bibnamefont{D'Ariano}} \bibnamefont{and}
  \bibinfo{author}{\bibfnamefont{P.}~\bibnamefont{Lo~Presti}},
  \bibinfo{journal}{Phys. Rev. Lett.} \textbf{\bibinfo{volume}{86}},
  \bibinfo{pages}{4195} (\bibinfo{year}{2001}),
  \eprint{arXiv:quant-ph/0012071}.

\bibitem[{\citenamefont{Leung}(2003)}]{Leung03}
\bibinfo{author}{\bibfnamefont{D.~W.} \bibnamefont{Leung}},
  \bibinfo{journal}{Journal of Mathematical Physics}
  \textbf{\bibinfo{volume}{44}}, \bibinfo{pages}{528} (\bibinfo{year}{2003}),
  \eprint{arXiv:quant-ph/0201119}.

\bibitem[{\citenamefont{H\"{a}ffner et~al.}(2005)\citenamefont{H\"{a}ffner,
  H\"{a}nsel, Roos, Benhelm, Chek-Al-Kar, Chwalla, K\"{o}rber, Rapol, Riebe,
  Schmidt et~al.}}]{HaffnerTomography}
\bibinfo{author}{\bibfnamefont{H.}~\bibnamefont{H\"{a}ffner}},
  \bibinfo{author}{\bibfnamefont{W.}~\bibnamefont{H\"{a}nsel}},
  \bibinfo{author}{\bibfnamefont{C.~F.} \bibnamefont{Roos}},
  \bibinfo{author}{\bibfnamefont{J.}~\bibnamefont{Benhelm}},
  \bibinfo{author}{\bibfnamefont{D.}~\bibnamefont{Chek-Al-Kar}},
  \bibinfo{author}{\bibfnamefont{M.}~\bibnamefont{Chwalla}},
  \bibinfo{author}{\bibfnamefont{T.}~\bibnamefont{K\"{o}rber}},
  \bibinfo{author}{\bibfnamefont{U.~D.} \bibnamefont{Rapol}},
  \bibinfo{author}{\bibfnamefont{M.}~\bibnamefont{Riebe}},
  \bibinfo{author}{\bibfnamefont{P.~O.} \bibnamefont{Schmidt}},
  \bibnamefont{et~al.}, \bibinfo{journal}{Nature}
  \textbf{\bibinfo{volume}{438}}, \bibinfo{pages}{643} (\bibinfo{year}{2005}).

\bibitem[{\citenamefont{Bennett and Wiesner}(1992)}]{SuperdenseCoding}
\bibinfo{author}{\bibfnamefont{C.~H.} \bibnamefont{Bennett}} \bibnamefont{and}
  \bibinfo{author}{\bibfnamefont{S.~J.} \bibnamefont{Wiesner}},
  \bibinfo{journal}{Phys. Rev. Lett.} \textbf{\bibinfo{volume}{69}},
  \bibinfo{pages}{2881} (\bibinfo{year}{1992}).

\bibitem[{\citenamefont{Calderbank et~al.}(1997)\citenamefont{Calderbank,
  Rains, Shor, and Sloane}}]{QECGeometry}
\bibinfo{author}{\bibfnamefont{A.~R.} \bibnamefont{Calderbank}},
  \bibinfo{author}{\bibfnamefont{E.~M.} \bibnamefont{Rains}},
  \bibinfo{author}{\bibfnamefont{P.~W.} \bibnamefont{Shor}}, \bibnamefont{and}
  \bibinfo{author}{\bibfnamefont{N.~J.~A.} \bibnamefont{Sloane}},
  \bibinfo{journal}{Phys. Rev. Lett.} \textbf{\bibinfo{volume}{78}},
  \bibinfo{pages}{405} (\bibinfo{year}{1997}).

\bibitem[{\citenamefont{Gottesman}(1998)}]{GottesmanFaultTolerance}
\bibinfo{author}{\bibfnamefont{D.}~\bibnamefont{Gottesman}},
  \bibinfo{journal}{Phys. Rev. A} \textbf{\bibinfo{volume}{57}},
  \bibinfo{pages}{127} (\bibinfo{year}{1998}), \eprint{arXiv:quant-ph/9702029}.

\bibitem[{\citenamefont{Shor}(1996)}]{ShorFaultTolerance}
\bibinfo{author}{\bibfnamefont{P.~W.} \bibnamefont{Shor}},
  \bibinfo{journal}{Proceedings of the 37th Annual Symposium on Foundations of
  Computer Science} pp. \bibinfo{pages}{56--65} (\bibinfo{year}{1996}),
  \eprint{arXiv:quant-ph/9605011}.

\bibitem[{\citenamefont{{Gottesman} and {Chuang}}(1999)}]{GottesmanChuang}
\bibinfo{author}{\bibfnamefont{D.}~\bibnamefont{{Gottesman}}} \bibnamefont{and}
  \bibinfo{author}{\bibfnamefont{I.~L.} \bibnamefont{{Chuang}}},
  \bibinfo{journal}{Nature} \textbf{\bibinfo{volume}{402}},
  \bibinfo{pages}{390} (\bibinfo{year}{1999}), \eprint{arXiv:quant-ph/9908010}.

\bibitem[{\citenamefont{{Montanaro} and {Osborne}}(2008)}]{QBF}
\bibinfo{author}{\bibfnamefont{A.}~\bibnamefont{{Montanaro}}} \bibnamefont{and}
  \bibinfo{author}{\bibfnamefont{T.~J.} \bibnamefont{{Osborne}}},
  \emph{\bibinfo{title}{{Quantum boolean functions}}} (\bibinfo{year}{2008}),
  \bibinfo{note}{arXiv:0810.2435}.

\bibitem[{\citenamefont{{Aaronson}}(2007)}]{AaronsonLearnability}
\bibinfo{author}{\bibfnamefont{S.}~\bibnamefont{{Aaronson}}},
  \bibinfo{journal}{Royal Society of London Proceedings Series A}
  \textbf{\bibinfo{volume}{463}}, \bibinfo{pages}{3089} (\bibinfo{year}{2007}),
  \eprint{arXiv:quant-ph/0608142}.

\bibitem[{\citenamefont{{Aaronson} and
  {Gottesman}}(2009)}]{AaronsonGottesmanStabilisers}
\bibinfo{author}{\bibfnamefont{S.}~\bibnamefont{{Aaronson}}} \bibnamefont{and}
  \bibinfo{author}{\bibfnamefont{D.}~\bibnamefont{{Gottesman}}}
  (\bibinfo{year}{2009}), \bibinfo{note}{{Unpublished}}.

\bibitem[{\citenamefont{Calderbank et~al.}(1998)\citenamefont{Calderbank,
  Rains, Shor, and Sloane}}]{Calderbank98quantumerror}
\bibinfo{author}{\bibfnamefont{A.~R.} \bibnamefont{Calderbank}},
  \bibinfo{author}{\bibfnamefont{E.~M.} \bibnamefont{Rains}},
  \bibinfo{author}{\bibfnamefont{P.~W.} \bibnamefont{Shor}}, \bibnamefont{and}
  \bibinfo{author}{\bibfnamefont{N.~J.~A.} \bibnamefont{Sloane}},
  \bibinfo{journal}{IEEE Trans. Inform. Theory} \textbf{\bibinfo{volume}{44}},
  \bibinfo{pages}{1369} (\bibinfo{year}{1998}).

\bibitem[{\citenamefont{{Harrow} and {Winter}}(2006)}]{HowManyCopiesStateDesc}
\bibinfo{author}{\bibfnamefont{A.~W.} \bibnamefont{{Harrow}}} \bibnamefont{and}
  \bibinfo{author}{\bibfnamefont{A.}~\bibnamefont{{Winter}}}
  (\bibinfo{year}{2006}), \bibinfo{note}{arXiv:quant-ph/0606131}.

\bibitem[{\citenamefont{{Arora} et~al.}(1998)\citenamefont{{Arora}, {Lund},
  {Motwani}, {Sudan}, and {Szegedy}}}]{PCPTheorem}
\bibinfo{author}{\bibfnamefont{S.}~\bibnamefont{{Arora}}},
  \bibinfo{author}{\bibfnamefont{C.}~\bibnamefont{{Lund}}},
  \bibinfo{author}{\bibfnamefont{R.}~\bibnamefont{{Motwani}}},
  \bibinfo{author}{\bibfnamefont{M.}~\bibnamefont{{Sudan}}}, \bibnamefont{and}
  \bibinfo{author}{\bibfnamefont{M.}~\bibnamefont{{Szegedy}}},
  \bibinfo{journal}{J. ACM} \textbf{\bibinfo{volume}{45}}, \bibinfo{pages}{501}
  (\bibinfo{year}{1998}).

\end{thebibliography}
\end{document}